\documentclass[11pt,a4paper]{article}

\usepackage{authblk}
\usepackage{cite}                                       
\usepackage[latin1]{inputenc}
\usepackage[english]{babel}
\usepackage{amssymb,amsmath,amsbsy}
\usepackage{a4wide}
\usepackage[colorinlistoftodos]{todonotes}
\usepackage{enumitem}
\usepackage{color}
\usepackage{pdflscape}
\usepackage{tikz}
\usetikzlibrary{calc}
\newcommand{\tikzmark}[1]{\tikz[baseline,remember picture] \coordinate (#1) {};}

\newcommand{\K}{\mathcal K}
\newcommand{\Ph}{\mathcal P}

\newcommand{\R}{\mathcal R}

\newcommand{\C}{\mathcal C}

\renewcommand{\Re}{\mathfrak {Re}}
\renewcommand{\Im}{\mathfrak {Im}}

\newcommand{\tond}[1]{{\left(#1\right)}}
\newcommand{\quadr}[1]{{\left[#1\right]}}
\newcommand{\inter}[1]{{\langle#1\rangle}}
\newcommand{\graff}[1]{{\left\{#1\right\}}}

\newcommand{\derp}[2]{{\frac{\partial #1}{\partial #2}}}

\newcommand{\norm}[1]{\left\|#1\right\|}

\def\CC{\mathbb{C}}
\def\RR{\mathbb{R}}
\def\ZZ{\mathbb{Z}}

\def\Id{\mathbb I}
\def\eps{\epsilon}
\def\vphi{\varphi}
\def\im{i}

\newtheorem{theorem}{Theorem}[section]
\newtheorem{theorem*}{Theorem}
\newtheorem{lemma}{Lemma}[section]

\newtheorem{remark}{Remark}[section]

\newenvironment{proof}[0]{\noindent{\bf proof:}}{$\square$\par\medskip}

\title{On the nonexistence of degenerate phase-shift discrete
  solitons\\ in a dNLS nonlocal lattice.}

\author[1]{T. Penati}
\author[1]{M. Sansottera}
\author[1]{S. Paleari}
\author[2]{V. Koukouloyannis}
\author[3]{P.G. Kevrekidis.}

\affil[1]{Department of Mathematics ``F.Enriques'', Milano University, via Saldini 50, Milano, Italy, 20133}
\affil[2]{Department of Mathematics, Statistics and Physics, College of Arts and Sciences, Qatar University, P.O. Box 2713, Doha, Qatar}
\affil[3]{Department of Mathematics and Statistics, University of Massachusetts, Amherst, MA 01003-4515, USA}

\date{\Large{\tt Published version available here:
    https://doi.org/10.1016/j.physd.2017.12.012}}

\begin{document}

\maketitle

\begin{abstract}
 We consider a one-dimensional discrete nonlinear Schr{\"o}dinger
 (dNLS) model featuring interactions beyond nearest neighbors. We are
 interested in the existence (or nonexistence) of phase-shift discrete
 solitons, which correspond to four-sites vortex solutions in the
 standard two-dimensional dNLS model (square lattice), of which this
 is a simpler variant.  Due to the specific choice of lengths of the
 inter-site interactions, the vortex configurations considered present
 a degeneracy which causes the standard continuation techniques to be
 non-applicable.

In the present one-dimensional case, the existence of a conserved
quantity for the soliton profile (the so-called density current),
together with a perturbative construction, leads to the nonexistence
of any phase-shift discrete soliton which is at least $C^2$ with
respect to the small coupling $\eps$, in the limit of vanishing
$\eps$. If we assume the solution to be only $C^0$ in the same limit
of $\eps$, nonexistence is instead proved by studying the bifurcation
equation of a Lyapunov-Schmidt reduction, expanded to suitably high
orders. Specifically, we produce a nonexistence criterion whose
efficiency we reveal in the cases of partial and full degeneracy of
approximate solutions obtained via a leading order expansion.
\end{abstract}

\bigskip

KEYWORDS: discrete Non-Linear Schr{\"o}dinger, discrete solitons,
discrete vortex, current conservation, perturbation theory,
Lyapunov-Schmidt decomposition.


\section{Introduction}
\label{s:0}

The discrete nonlinear Schr{\"o}dinger (DNLS) is a prototypical
nonlinear lattice dynamical model whose analytical and numerical
tractability has enabled a considerable amount of progress towards
understanding lattice solitary waves/coherent
structures~\cite{Kev_book09}.  Its apparent simplicity in
incorporating the interplay of nonlinearity and discrete dispersion,
together with its relevance as an approximation of optical waveguide
systems~\cite{chris03, sukh03, led08} and atomic systems in optical
lattices~\cite{mor06} have significantly contributed to the popularity
of the model. Moreover, its ability to capture numerous linear and
nonlinear experimentally observed features has made it a useful
playground for a diverse host of phenomena. Such examples include, but
are not limited to discrete diffraction~\cite{yaron} and its
management~\cite{yaron1}, discrete solitons~\cite{yaron,yaron2,yaron3}
and vortices~\cite{neshev,fleischer}, Talbot revivals~\cite{christo2},
and $\mathcal{PT}$-symmetry and its breaking~\cite{kip}, among many
others.

Among the solutions that are of particular interest within the dNLS
model are the so-called phase-shift ones, for which the solution does
not bear a simply real profile (with phases $0$ and $\pi$, or sites
that are in- and out-of phase), but rather is genuinely complex
featuring nontrivial phases~\cite{Kev_book09}. Such solutions are more
widely known as ``discrete vortices'' due to the fact that in order to
ensure single-valuedness of the solution, upon rotation over a closed
discrete contour, they involve variation of the phase by a multiple of
$2 \pi$. Such waveforms were originally proposed theoretically
in~\cite{vort1,vort2} and subsequently observed experimentally,
especially in the setting of optically induced photorefractive
crystals~\cite{vort3,vort4}. A systematic analysis of their potential
existence in the standard nearest-neighbor dNLS model was provided
in~\cite{PelKF05b} and the relevant results, not only for 2d square
lattices, but also for lattices of different types (triangular,
honeycomb, etc.)  were subsequently summarized
in~\cite{Kev_book09}. One of the main findings in this context are
that the (symmetric) square vortices (with $\pi/2$ phase
shifts between adjacent excited nodes) indeed persist at high order in
case of the square lattice for different topological charges. Another
relevant conclusion was that hexagonal and honeycomb lattices present
the potential for vortices of topological charge both $S=1$ and $S=2$.
Intriguingly, among the two and for focusing nonlinearity, the latter
was found to be more stable than the former (whereas the stability
conclusions were reversed in the case of self-defocusing
nonlinearity.

However, the relevant analysis poses some intriguing mathematical
questions. In particular, the consideration of the most canonical
$4$-site vortex in the 2d nearest-neighbor dNLS reveals (due to the
relative phase change between adjacent sites of $\pi/2$) a degeneracy
of the relevant waveforms and of their potential persistence. By
degeneracy, here we mean that the standard approach of looking for
critical points of the averaged (along the unperturbed periodic
solution) perturbation (see \cite{Kap01,Kev09}), produces
one-parameter families of solutions, where the Implicit Function
Theorem cannot be applied. The tangent direction to the family
represents a direction of degeneracy.  Typically this calculation
proceeds from the so-called anti-continuum limit~\cite{Mak96} in
powers of the coupling. Given then the degeneracy of these vortex
states, a lingering question is whether such states will persist {\it
  to all orders} or whether they may be destroyed (i.e., the relevant
persistence conditions will not be satisfied) at a sufficiently high
order. While to all the leading orders considered
in~\cite{PelKF05b,Kev_book09} these solutions persist,
in case of high degeneracy expansions
up to high orders ($\eps^6$ in the $\pi/2$-vortex, $\eps$ being the
perturbation parameter) are necessary to reach a definitive answer to
this question.

Inspired by the inherent difficulty of tackling the 2d problem, here
we will opt to examine a simpler 1d problem. As a ``caricature'' of
the 2d interaction, where the fourth site of a given square contour
couples {\it back} into the first site, we choose to examine a {\it
  one-dimensional} model involving interactions not only of nearest
neighbors (NN), but also of neighbors that are next-to-next-nearest
(NNNN) ones. In principle, isolating a quadruplet of sites, we
reconstruct a geometry similar to the 2d contour. The question that we
ask in this simpler (per its 1d nature) setting is whether phase-shift
solutions will exist. Surprisingly, and differently from the 2d
scenario where at least the $\pi/2$ vortex was shown to exist, the
answer that we find here is always in the negative. Using both a more
direct, yet more restrictive, method involving a conserved flux
quantity, as well as a more elaborate, yet less restrictive technique
based on Lyapunov-Schmidt reductions, we illustrate that such vortical
states are always precluded from existence at a sufficiently high
order. Since the continuation problem requires to explore the
persistence of the solution at high orders in $\eps$, where the
differences with respect to the 2d model, in terms of lattice shape
and interaction among sites, play a role, it is perhaps not surprising
that we report here a different result in comparison to the
$\pi/2$-vortex solution of the 2d lattice.

Our presentation will be structured as follows. In section 2, we will
discuss the model and the principal result. In section 3, we will
provide a perturbative approach which, combined with the conservation
of the density current, leads towards the formulation of a finite
regularity ($\C^2$) version of this result. In section 4, we will
overcome the technical limitation of the above regularity requirement by extending
considerations to Lyapunov-Schmidt decompositions. Finally, in section
5, we will summarize our findings and present our conclusions, as well
as some emerging questions for future work.

\section{Theoretical Setup and Principal Results}
\label{s:1}

As explained in the previous section, the aim of the work is to
investigate the existence of discrete solitons in the NN and NNNN dNLS
model of the form:
\begin{equation}
  \label{e.dNLS.eqs}
  \im\dot\psi_j = \psi_j -\frac\eps2\quadr{(\Delta_1+\Delta_3)\psi}_j +
  \frac34\psi_j|\psi_j|^2 \ ,
\qquad
  (\Delta_l\psi)_j := \psi_{j+l}-2\psi_j+\psi_{j-l}\ ,
\end{equation}
with vanishing boundary conditions at infinity $\psi\in\ell^2(\CC)$,
in the anti-continuum limit, namely $\eps\to 0$.  The equations can be
written in the Hamiltonian form $\im\dot\psi_j =
\derp{\K}{\overline\psi_j}$ with
\begin{equation}
  \label{e.KdNLS}
  \K = \sum_{j\in\ZZ}|\psi_j|^2 + 
  \frac{\eps}2\sum_{j\in\ZZ}\tond{|\psi_{j+1}-\psi_j|^2+|\psi_{j+3}-\psi_j|^2}
  + \frac38\sum_{j\in\ZZ}|\psi_j|^4\ .
\end{equation}

The original motivation leading to the study of the above model was
the continuation of 4-site multibreathers in the Klein-Gordon model
with NN and NNNN linear
interactions of equal strength (once again motivated by
the 2d problem), namely
\begin{equation}
  \label{e.H}
  H = \frac12\sum_{j\in\ZZ} \tond{ y^2_j + x^2_j } +
    \frac{\eps}2\sum_{j\in\ZZ} \tond{ (x_{j+1}-x_j)^2 + (x_{j+3}-x_j)^2 } +
  \frac{1}4\sum_{j\in\ZZ} x_j^4 \ .
\end{equation}
On one hand, this Hamiltonian represents a special case of the
three-parameters family of up to next-to-next-nearest neighbor
Klein-Gordon Hamiltonians
\begin{equation*}
  \label{e.H.kappa12}
  H_{\kappa_1,\kappa_2,\kappa_3} = \frac12\sum_{j\in\ZZ} \tond{ y^2_j + x^2_j } +
    \frac{\eps}2\sum_{j\in\ZZ} \sum_{m=1}^3  \kappa_m (x_{j+m}-x_j)^2 +
  \frac{1}4\sum_{j\in\ZZ} x_j^4 \ ,
\end{equation*}
where the value of $\kappa_m$ controls the range $m$ interaction
strength. The emergence of phase-shift multibreathers for values of
$\kappa_m$ large enough has been recently investigated
\cite{KouKCR13,Rap13}, while, the nonexistence of phase-shift
multibreathers in the standard Klein-Gordon model with only
nearest-neighbours interactions has been proved in \cite{Kou13}.

On the other hand, the particular choice in \eqref{e.H},
i.e. $\kappa_2=0,\;\kappa_1=\kappa_3=1$, comes from the idea of
reproducing in the 1D setting those vortex-like 4-sites interactions
which are peculiar of a squared-lattice 2D model, with a standard
nearest-neighbours interaction.  In the latter case the approximate
4-sites vortex solutions (obtained with $\epsilon=0$) are degenerate
objects~\cite{CueKKA11}, as a standard averaging analysis shows (see
\cite{Mak96,Aub97,Ahn98,AhnMS02}).  Thus it is not possible to apply
the usual implicit function theorem for their continuation.
Nevertheless, it is well known (see \cite{PalP14,PelPP16}) that for
sufficiently low energies ($E\ll 1$) and in a suitable anti-continuum
limit regime (namely $\eps\ll E$), \eqref{e.H} can be approximated
by~\eqref{e.KdNLS}. Indeed, the dNLS Hamiltonian \eqref{e.KdNLS} turns
out to be a resonant normal form of \eqref{e.H}; this is evident by
averaging both the coupling term and the nonlinear term with respect
to the periodic flow given by the harmonic part of the Hamiltonian
\eqref{e.H}. The canonical change of coordinates which averages the KG
model \eqref{e.H}, generates both the normal form Hamiltonian
\eqref{e.KdNLS} and also some remainder terms: the energy and small
coupling regimes we are mentioning are those which guarantee the
remainder terms to be much smaller, hence negligible, than the normal
form \eqref{e.KdNLS}. The interesting point in this normal form
perspective is that the degeneracy is present also in the
corresponding 4-sites discrete vortices of the normal form; the
latter, being a model of the dNLS type, allows for an easier, thus
more accurate and complete, analysis via perturbation theory. We plan
to exploit the present results in order to transfer such an analysis
to the original model in a forthcoming paper \cite{PKSPK17kg}.

Finally, besides our original motivation of establishing a connection
between solutions of the Klein-Gordon model with those of the
corresponding dNLS normal form, we stress that the study of discrete
solitons in beyond nearest neighbor 1-dimensional dNLS models has
received some attention in the recent
literature~\cite{ChoCMK11,Kev09,BenCMP15}. We expect that the results
we are going to present can provide some additional insight on this
and similar topics, as, e.g., for discrete solitons in zigzag dNLS
models~\cite{EfrC02}.

In order to state the results of the present paper, we recall that
we are interested in periodic solutions of \eqref{e.dNLS.eqs} of the
form
\begin{equation}
\label{e.ansatz}
\psi_j = e^{-\im \lambda t}\phi_j\ ,\qquad
\graff{\phi_j}_{j\in\ZZ}\in\ell^2(\CC)\ ;
\end{equation}
by inserting the previous ansatz in \eqref{e.dNLS.eqs} one gets the
{\sl stationary equation}
\begin{equation}
  \label{e.dNLS.phi}
  \omega\phi_j =
    -\frac\eps2\Bigl[ (\Delta_1+\Delta_3)\phi \Bigr]_j +
    \frac34\phi_j|\phi_j|^2\ ,
\qquad
  \text{where}
\ \omega := \lambda-1\ .
\end{equation}

Specifically, being interested in many-site discrete solitons and in
particular in vortex-like 4-site solutions, among the infinitely many
trivial solutions of the unperturbed case, i.e., \eqref{e.dNLS.phi}
with $\eps=0$, we investigate a 4-dimensional torus (bearing, once
again, in mind the analogy with the 2d lattice). Furthermore, since
the discrete soliton solutions we are considering are single frequency
solutions, namely in the form of standing waves as in
\eqref{e.ansatz}, we need to set in the anti-continuum limit a fixed
common amplitude $R$ (which is fixed by the frequency $\lambda$, see
\eqref{e.om_lam}), so that the unperturbed solutions read

\begin{equation}
  \label{e.4Dtorus}
  \phi_j^{(0)} = \begin{cases}
    R e^{\im \theta_j} \, , &j\in S\ ,       \\
    0               \, , &j\not\in S\ ,
  \end{cases}
\qquad
  \text{where}\ S=\graff{1,2,3,4} \ \text{and}\ R>0\ .
\end{equation}
All these orbits are uniquely defined except for a phase
shift, due to the action of the symmetry $e^{\im\sigma}$ along the
orbit, which corresponds to a change of the initial configuration in
the ansatz \eqref{e.ansatz}.

We are interested in the investigation of the \emph{breaking of such a
  completely resonant lower dimensional torus}, i.e. we want to
determine which solutions are going to survive as $\eps\not=0$, at
fixed $\omega$ (and hence at fixed period). Before continue any
further, we introduce the phase shifts between successive sites as
\begin{equation}
\label{e.phidef}
\varphi_j := \theta_{j+1}-\theta_j\  .
\end{equation}
Our principal finding can be encapsulated then in the following
statement, which provides a negative answer to the possible existence
of phase-shift discrete solitons:
\begin{theorem}
\label{t.nonexistence}
For $\eps$ small enough ($\eps\not=0$), the only unperturbed solutions
\eqref{e.4Dtorus} that can be continued at fixed period to solutions
$\phi(\eps)$ of \eqref{e.dNLS.phi}, correspond to
$\varphi_j\in\graff{0,\pi}$ and $j\in S'=\{1,2,3\}$.
\end{theorem}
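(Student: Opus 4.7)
The plan is to follow the two-track strategy announced in the introduction. Any configuration with $\varphi_j\in\{0,\pi\}$ for $j=1,2,3$ corresponds, up to a global $U(1)$ phase, to a real-valued profile supported on $S$, and such real anti-continuum configurations can be continued by a standard application of the Implicit Function Theorem to the averaged Hamiltonian (cf.\ \cite{Kap01,Kev09,Mak96}); the task is therefore to rule out every other choice of $(\varphi_1,\varphi_2,\varphi_3)$.

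Under the stronger regularity assumption $\phi(\eps)\in \C^2$, I would first derive a discrete current conservation law for \eqref{e.dNLS.phi}. Multiplying by $\bar\phi_j$ and taking imaginary parts annihilates both the on-site linear and cubic contributions and leaves a telescoping identity in the fluxes $J^{(m)}_j := \Im(\bar\phi_j\phi_{j+m})$ for $m=1,3$. Rearranging, the density
\begin{equation*}
Q_j \,=\, J^{(1)}_j + J^{(3)}_j + J^{(3)}_{j-1} + J^{(3)}_{j-2}
\end{equation*}
is $j$-independent and hence vanishes identically by the $\ell^2$ decay. Evaluating $Q_1=Q_2=Q_3=0$ on the ansatz \eqref{e.4Dtorus} at leading order in $\eps$ produces
\begin{equation*}
\sin\varphi_1 \,=\, \sin\varphi_2 \,=\, \sin\varphi_3 \,=\, -\sin(\varphi_1+\varphi_2+\varphi_3),
\end{equation*}
which already eliminates most phase configurations and leaves only $\varphi_j\in\{0,\pi\}$ together with a short list of degenerate symmetric vortices (the $\pi/2$-vortex being the canonical example). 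To kill the survivors I would compute the first correction $\phi^{(1)}$ by inverting the linearization on a suitable complement of the kernel of phase rotations, and then plug the $O(\eps)$ expansion back into $Q_j=0$; the second-order identities should be incompatible with the residual symmetric phase choices, giving the theorem for $\C^2$ solutions.

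To lift the regularity assumption I would implement a Lyapunov-Schmidt reduction around $\phi^{(0)}$. The real linearization of \eqref{e.dNLS.phi} at $\eps=0$ has a four-dimensional kernel generated by $\{\partial_{\theta_k}\phi^{(0)}\}_{k\in S}$; the range equation then determines a corrector $\eta(\theta,\eps)$ via the Implicit Function Theorem, and substituting back yields a bifurcation equation on the kernel which, after quotienting the global $U(1)$ symmetry, is a three-component map $B(\varphi,\eps)=\sum_{k\geq 1}\eps^k B^{(k)}(\varphi)$ in $\varphi=(\varphi_1,\varphi_2,\varphi_3)$. At $k=1$, $B^{(1)}$ reproduces the leading-order conditions above; higher orders $B^{(k)}$ progressively trim the candidate set. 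The concrete nonexistence criterion then reads: at the smallest $k$ for which $B^{(k)}$ is not identically zero on the candidate manifold cut out by the preceding orders, any $\varphi^\star$ with $B^{(k)}(\varphi^\star)\neq 0$ fails to be continued. Applying this criterion to the partially degenerate configurations and to the fully degenerate symmetric-vortex candidates left over at leading order yields the theorem without any regularity assumption on $\phi(\eps)$.

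The main technical obstacle is the explicit computation of the higher-order $B^{(k)}$ (equivalently of the $\eps^2$-corrected $Q_j$): the recursive inversion of the linearization populates new lattice sites at every step, and the algebraic identities that actually detect the obstruction involve delicate cancellations induced by the $U(1)$ symmetry. Disentangling apparent obstructions that are mere gauge artefacts from the genuine ones, and going far enough in the expansion to catch the fully degenerate $\pi/2$-vortex, is where the bulk of the work in Sections 3 and 4 should be concentrated.
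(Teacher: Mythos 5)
Your overall architecture coincides with the paper's: the conserved density you write down telescopes correctly (it is the paper's current up to a sign and an index shift), the leading-order conditions $\sin\varphi_1=\sin\varphi_2=\sin\varphi_3=-\sin(\varphi_1+\varphi_2+\varphi_3)$ are exactly \eqref{e.shifts}, and the two-track plan (current expansion under $\C^2$, Lyapunov--Schmidt for mere continuity) is the one actually carried out. However, there are two genuine gaps in the way you propose to execute it. First, you assert that the leading-order conditions leave ``only $\varphi_j\in\{0,\pi\}$ together with a short list of degenerate symmetric vortices.'' This is false: they leave three entire one-parameter families of asymmetric vortices, $(\varphi,\pi-\varphi,\pi-\varphi)$, $(\varphi,\varphi,\pi-\varphi)$, $(\varphi,\pi-\varphi,\varphi)$, intersecting at $\pm(\tfrac\pi2,\tfrac\pi2,\tfrac\pi2)$. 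A continuum of partially degenerate candidates must be excluded, and the mechanism that kills them at second order (a rank incompatibility between the singular matrix $M$ of the first-order kernel conditions and the inhomogeneous term proportional to $\sin\varphi$) is structurally different from the mechanism that kills the fully degenerate symmetric vortices (an inhomogeneous \emph{quadratic} system in the first-order kernel parameters with no real solutions). Your plan, which targets only ``the residual symmetric phase choices,'' would leave the asymmetric families unaddressed.

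Second, your nonexistence criterion in the Lyapunov--Schmidt part --- ``any $\varphi^\star$ with $B^{(k)}(\varphi^\star)\neq 0$ fails to be continued'' --- is not valid as stated. The bifurcation equation is a function of the kernel displacement $k$ \emph{and} $\eps$ jointly, and a continued solution only requires $k(\eps)\to 0$, not $k\equiv 0$; a nonzero $\eps^k$-coefficient at the base point can in principle be compensated by a displacement along the kernel. What is actually needed is a solvability analysis of the polynomial in $(k,\eps)$: for the partially degenerate families the correct necessary condition is that the $\eps^2$-coefficient $\Pi_K L h^{(2,0)}(v^*)$ lie in the range of the linearization $T=D_kP(v^*;0,0)$ (equivalently, be orthogonal to the tangent direction of the family), which requires a secondary Lyapunov--Schmidt splitting of the kernel equation itself; for the fully degenerate vortices, where $T\equiv 0$, one must show that the full second-order system $P_2(v^*;k,\eps)=0$, quadratic in the four kernel variables, admits no real solutions at all. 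Without these refinements your criterion could neither justify nonexistence for the asymmetric families nor reach the contradiction for the $\pi/2$-vortex.
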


We stress that the existence of the trivial phase-shift solutions
$\varphi_j\in\graff{0,\pi}$ with $j\in S'$ can be obtained with
standard arguments, restricting to real\footnote{Briefly, the
  idea is that restricting to real solutions of \eqref{e.dNLS.phi},
  the kernel directions (which are purely imaginary) are removed and
  the implicit function theorem can be applied. From a variational perspective, the
  restriction provides a critical point on the real phase space
  $\ell^2(\RR)$ which is indeed a critical point also on the complex
  phase space $\ell^2(\CC)$, due to the invariance of the Hamiltonian
  under conjugacy (see \cite{BamPP10}, Section 5).} solutions $\phi_j$
of the stationary equation \eqref{e.dNLS.phi} (see \cite{PelKF05}
Proposition 2.1 and references therein).


In order to prove a nonexistence result like the one stated above, a
natural strategy could be the use a perturbative approach, i.e. 
expand both the candidate solution and the stationary equation in
powers of $\eps$ and look for an obstruction to the solution
order-by-order. The first issue in such a procedure is that the
obstruction may appear at a high order; indeed that's the case in
our model. We overcome this problem by exploiting the existence
of a conserved quantity, the so-called density current (see
  \cite{HenT99,Kev_book09}),
\begin{equation}
  \label{e.current}
  J_j = \Im\Bigl(  
    \phi_{j-1}\overline\phi_j + \phi_{j-3}\overline\phi_j +
    \phi_{j-2}\overline\phi_{j+1} + \phi_{j-1}\overline\phi_{j+2}
    \Bigr)\ ,\quad\text{for}\ j\in\ZZ\ ,
\end{equation}
that allows us to explore the implications of the obstruction (towards
this conservation law), greatly simplifying the calculations and
reveal them in a much lower order.  A detailed description is
presented in the first part of the paper (Section~\ref{s:break}, see
Theorem~\ref{t.notsmooth}).  We remark that this idea is expected to
be applicable more broadly to several different one-dimensional
models; clearly, due to the lack of a straightforward analogue of the
density current~\eqref{e.current} in two or three dimensions, there is
a natural difficulty towards plainly extending this approach to higher
(e.g., two or three) dimensional models.  We nevertheless believe that
the present analysis is useful to show how degenerate objects, which
are likely to exist at lower order expansions, might have some
obstruction to their continuation at higher order in perturbation
theory.

Besides the technical difficulty, there is an intrinsic problem in the
perturbative approach: it cannot provide a {\sl complete} non existence
result.  Indeed one has to assume enough regularity of the solution to
perform its expansion; in our case, due to the use of the density
current, obstructions show up at the second order, thus
Theorem~\ref{t.notsmooth} states that there cannot be any $\C^2$
discrete soliton solution in the perturbation parameter $\eps$,
except for the standard in-phase or out-of-phase ones.
The complete proof of Theorem~\ref{t.nonexistence}
requires to develop a Lyapunov-Schmidt decomposition which allows to
rely on the regularity of the equations instead of the regularity of
the solution, as shown in the second part of the paper (Section
\ref{s:LS}).  An interesting remark is that the unsolvable systems of
either linear or quadratic equations which appear in the density
current expansion of Section~\ref{s:break}, are essentially included
at the level of either the linearized (as in \cite{PelKF05b}) or
quadratic bifurcation\footnote{By bifurcation equation we mean, as
  usual in this approach, the kernel equation of the Lyapunov-Schmidt
  decomposition. We stress here that the linearized bifurcation
  equation we get is the same as obtained in \cite{PelKF05b}. Instead of
  working with the variational formulation of the problem, we here
  prefer to perform the Taylor expansion at the level of the
  stationary equation, since we are not interested in the linear
  stability but only in the continuation of all the candidate orbits.}
equation. Hence, {\sl a posteriori}, we could say that the
obstructions to the continuation which arise in the density current
expansion, {\sl represent an effective non-existence criterion}.


\section{A perturbative approach: finite regularity result}
\label{s:break}

In the present Section we tackle the problem with a perturbative
approach.  For this purpose, we assume to deal with a continuation of
$\{\phi_j(\eps)\}_{j\in\ZZ}$ which is at least $\C^2$ in $\eps$ and
write 
\begin{equation}
\label{e.exp.phi}
\phi_j = \phi^{(0)}_j + \eps\phi^{(1)}_j + \eps^2\phi^{(2)}_j +
o(\eps^2)\ .
\end{equation}
The continuation is assumed to be performed at fixed period
(frequency). The results (that are weaker than
Theorem~\ref{t.nonexistence}) are collected in the following

\begin{theorem}
\label{t.notsmooth}
For $\eps$ small enough ($\eps\not=0$), the only unperturbed solutions
\eqref{e.4Dtorus} that can be continued at fixed period to $\C^2$
solutions $\phi(\eps)$ of \eqref{e.dNLS.phi} 
correspond to $\varphi_j\in\graff{0,\pi}, j\in S'$.
\end{theorem}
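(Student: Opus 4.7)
The plan is to exploit the conservation of the density current \eqref{e.current}: from \eqref{e.dNLS.eqs} one checks that along any stationary solution $J_{j+1}=J_j$, so that $\phi\in\ell^2(\CC)$ forces $J_j\equiv 0$ for every $j\in\ZZ$. Substituting the assumed $\C^2$-expansion \eqref{e.exp.phi} into the bilinear functional $J_j$ produces
\begin{equation*}
  J_j=J_j^{(0)}+\eps J_j^{(1)}+\eps^2 J_j^{(2)}+o(\eps^2),
\end{equation*}
each of whose coefficients must vanish identically in $j$. The goal is to combine these vanishing conditions with the order-by-order solvability of the stationary equation \eqref{e.dNLS.phi} until only the claimed trivial configurations survive.

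At zeroth order, since $\phi^{(0)}$ is supported on $S=\{1,2,3,4\}$, the quantity $J_j^{(0)}$ is non-trivially constrained only at $j=2,3,4$, and a direct computation yields
\begin{equation*}
  J_2^{(0)}=-R^2\bigl[\sin\varphi_1+\sin(\varphi_1+\varphi_2+\varphi_3)\bigr],\quad
  J_3^{(0)}=-R^2\bigl[\sin\varphi_2+\sin(\varphi_1+\varphi_2+\varphi_3)\bigr],
\end{equation*}
and an analogous formula for $J_4^{(0)}$. Their simultaneous vanishing gives $\sin\varphi_1=\sin\varphi_2=\sin\varphi_3=-\sin(\varphi_1+\varphi_2+\varphi_3)$, a system which admits both the trivial $\varphi_j\in\{0,\pi\}$ and several non-trivial families (notably the $\pi/2$-vortex). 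Hence zeroth order alone is not sufficient and one must go higher.

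To access the next orders I would determine $\phi^{(1)}$ from the stationary equation at order $\eps$. Outside $S$ the equation is purely linear and inverts algebraically to give $\phi_j^{(1)}$ explicitly in terms of $(\Delta_1+\Delta_3)\phi^{(0)}$; on $S$ the linearisation around $\phi^{(0)}$ has a kernel coming from the global $U(1)$ phase-rotation symmetry, so that writing $\phi_j^{(1)}=A_j e^{\im\theta_j}$ for $j\in S$, the real parts $\Re A_j$ are fixed by solvability conditions -- which reproduce once more the leading-order phase relations above -- while the imaginary parts $\Im A_j$ remain as free kernel modes. With $\phi^{(1)}$ thus determined on all of $\ZZ$ modulo these modes, I would substitute into $J_j^{(1)}$ and $J_j^{(2)}$ and enforce their vanishing on a suitable window of sites around $S$.

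The main obstacle is the combinatorial bookkeeping at order $\eps^2$: the bilinear structure of $J_j$ generates many cross-terms between $\phi^{(0)}$, $\phi^{(1)}$ and $\phi^{(2)}$, and one must verify that the gauge-like freedoms $\Im A_j$ cancel from the resulting phase constraints (as expected since $J_j$ is $U(1)$-invariant), for instance by forming site-differences $J_j^{(k)}-J_{j+1}^{(k)}$. Once this decoupling is established, what remains are trigonometric equations in $\varphi_1,\varphi_2,\varphi_3$ alone which, combined with the zeroth-order restrictions, admit only $\varphi_j\in\{0,\pi\}$ for $j\in S'=\{1,2,3\}$. This destroys in particular the degenerate $\pi/2$-vortex candidate and completes the proof of the finite-regularity statement.
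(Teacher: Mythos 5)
Your overall strategy (density current conservation plus a perturbative expansion to order $\eps^2$, combined with order-by-order solvability of the stationary equation) is exactly the one the paper uses, and your zeroth- and first-order setup matches the paper's. However, there is a genuine gap at the decisive step. You assert that the kernel freedoms $\Im A_j$ (the paper's $\alpha_j$) ``cancel from the resulting phase constraints'' by $U(1)$-invariance, leaving ``trigonometric equations in $\varphi_1,\varphi_2,\varphi_3$ alone.'' This is not what happens, and an argument built on it cannot close. The $U(1)$ symmetry only guarantees that the single gauge combination $\alpha_1(1,1,1,1)$ drops out; the remaining combinations of the $\alpha_j$ enter the conditions $J^{(1)}_j=0$ and $J^{(2)}_j=0$ nontrivially. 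Concretely, at order $\eps$ the current gives a homogeneous linear system $M\pmb{\alpha}=0$ with $M\propto\cos\vphi$, and at order $\eps^2$ an inhomogeneous system $M\pmb{\alpha}'={\bf a}+{\bf b}(\pmb{\alpha})$ in the \emph{new} kernel variables $\pmb{\alpha}'$, with ${\bf a}\propto\sin\vphi$. The nonexistence of asymmetric vortices follows from a Fredholm-type solvability obstruction (the rank of $M$ versus that of the augmented matrix forces $\sin\vphi=0$), not from a trigonometric identity in the phases.

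The gap is fatal precisely for the case your theorem most needs to exclude: the symmetric vortices $\pm(\tfrac\pi2,\tfrac\pi2,\tfrac\pi2)$. There the phases are already fixed, so there is no residual ``trigonometric equation in $\varphi$'' to violate --- indeed all leading-order phase relations are satisfied --- and moreover $M$ vanishes identically at both orders (full degeneracy), so the linear solvability argument is also empty. The actual obstruction is that $J^{(2)}_{2,3,4}=0$ reduces to a system of three \emph{quadratic} equations in the first-order kernel variables $\alpha_1,\dots,\alpha_4$ (remarkably, the $\alpha'_j$ drop out), with a strictly positive constant $C=16/(9R^2)$ on the left-hand sides; a short sign analysis shows this system has no real solutions. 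Without recognizing that the free kernel parameters are the unknowns of the incompatible system --- rather than quantities that cancel --- your proposal cannot rule out the $\pi/2$-vortex, and hence does not prove the theorem.
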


As anticipated in the Introduction, a key point is the fact
that~\eqref{e.dNLS.phi} preserves the density current, precisely
\begin{lemma}Let $\{\phi_j(\eps)\}_{j\in\ZZ}$ solve \eqref{e.dNLS.phi}, then
\begin{equation*}
  J_j := \Im\Bigl(  
    \phi_{j-1}\overline\phi_j + \phi_{j-3}\overline\phi_j +
    \phi_{j-2}\overline\phi_{j+1} + \phi_{j-1}\overline\phi_{j+2}
    \Bigr)
    \equiv 0\ ,\quad \forall j\in\ZZ\ .
\end{equation*}
\end{lemma}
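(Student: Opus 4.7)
The plan is to derive the identity $J_j\equiv 0$ directly from the stationary equation \eqref{e.dNLS.phi} by the classical ``multiply by $\overline{\phi_j}$ and take imaginary parts'' trick, followed by a telescoping manipulation adapted to the NN${}+{}$NNNN interaction range, and finally by using the $\ell^2$ boundary condition at infinity to pin down the constant of integration.

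First I would multiply \eqref{e.dNLS.phi} by $\overline{\phi_j}$, obtaining
$$\omega|\phi_j|^2 \;=\; -\frac{\eps}{2}\,\overline{\phi_j}\bigl[(\Delta_1+\Delta_3)\phi\bigr]_j + \frac{3}{4}|\phi_j|^4.$$
Since $\omega\in\RR$ and $|\phi_j|^4\in\RR$, both non-coupling terms have zero imaginary part, so, dividing by $-\eps/2$, I obtain the purely kinetic local identity
$\Im\bigl(\overline{\phi_j}[(\Delta_1+\Delta_3)\phi]_j\bigr) = 0$ for every $j\in\ZZ$.

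Next I would expand the two discrete Laplacians. The diagonal contribution $-4\phi_j$ yields a real term $-4|\phi_j|^2$, which drops from the imaginary part, leaving
$$\Im(\overline{\phi_j}\phi_{j+1}) + \Im(\overline{\phi_j}\phi_{j-1}) + \Im(\overline{\phi_j}\phi_{j+3}) + \Im(\overline{\phi_j}\phi_{j-3}) \;=\; 0.$$
Using the antisymmetry $\Im(\overline{\phi_m}\phi_n)=-\Im(\overline{\phi_n}\phi_m)$ and setting $b_j:=\Im(\overline{\phi_{j-1}}\phi_j)$, $c_j:=\Im(\overline{\phi_{j-3}}\phi_j)$, the relation reads $b_{j+1}-b_j+c_{j+3}-c_j=0$. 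Writing $c_{j+3}-c_j$ as the sum of three consecutive first differences and regrouping, one verifies that
$$I_j \;:=\; b_j + c_j + c_{j+1} + c_{j+2}$$
is a discrete conserved quantity, i.e.\ $I_{j+1}-I_j=0$. A direct term-by-term comparison with \eqref{e.current}, using $\Im(\overline{a}b)=-\Im(a\overline{b})$, shows $I_j=-J_j$, so $J_j$ is independent of $j$.

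Finally, the $\ell^2$ assumption $\phi\in\ell^2(\CC)$ forces $\phi_j\to 0$ as $|j|\to\infty$, and each of the four bilinear terms defining $J_j$ therefore tends to $0$; since $J_j$ is already a constant, $J_j\equiv 0$. The only genuinely non-routine step is the telescoping: the length-$3$ interaction produces a flux ``jump'' that cannot be absorbed into a nearest-neighbour current in one step, and one has to combine the NN quantity $b_j$ with three translates $c_j,c_{j+1},c_{j+2}$ of the NNNN quantity $c_j$; the number of translates matches the interaction range, and this is precisely why the conserved current $J_j$ in \eqref{e.current} is a sum of exactly four bilinears.
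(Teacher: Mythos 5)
Your proof is correct and follows essentially the same route as the paper's: multiply \eqref{e.dNLS.phi} by $\overline{\phi_j}$, take imaginary parts to kill the real on-site terms, obtain the local flux balance between the NN and NNNN bilinears, telescope the range-$3$ difference into three translates to build the conserved combination, and use the $\ell^2$ decay to conclude the constant is zero. Your $I_j=-J_j$ is just the paper's $J_n=a_n+b_n+b_{n+1}+b_{n+2}$ written with the opposite conjugation convention, so there is nothing substantive to add.
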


\begin{proof}
Let us define
\begin{displaymath}
a_n:=\Im(\phi_{n-1}\overline \phi_n)\quad\text{and}\quad
b_n:=\Im(\phi_{n-3}\overline \phi_n)\ ;
\end{displaymath}
then it is easy to see, e.g. by multiplying \eqref{e.dNLS.phi} by
$\overline\phi_j$ and exploiting the reality of some of the terms
thus obtained, that
\begin{equation}
\label{e.A.1}
a_{n}+b_{n} = a_{n+1}+b_{n+3}\ .
\end{equation}
If we define the density current as $J_n := a_{n}+b_{n}
+b_{n+1}+b_{n+2}$, by adding the quantity $b_{n+1} + b_{n+2}$ to both
the l.h.t. and r.h.t. of \eqref{e.A.1}, we get $J_n = J_{n+1}$.  The
hypothesis $\graff{\phi_j}_{j\in\ZZ}\in\ell^2(\CC)$, imply
$J_j = 0$, $\forall j\in\ZZ$.
\end{proof}

We present the detailed proof of Theorem~\ref{t.notsmooth} in the rest
of this Section. First, starting from the zero order expansion, we
determine the candidate solutions.  Then we use the expansion of the
stationary equation~\eqref{e.dNLS.phi}, together with the conserved
quantity~\eqref{e.current}, in order to find an incompatibility
condition and exclude all the solutions prohibited by
Theorem~\ref{t.notsmooth}. The advantage and novelty of this
approach lies in the fact that such an incompatibility is
revealed at a considerably lower order in the
conserved quantity, in comparison to the original equations
of motion.


\subsection{Zero order expansion and candidate solutions}
\label{s:zero.order}

The stationary equation \eqref{e.dNLS.phi} at order zero gives the
uncoupled system
\begin{equation}
\label{e.order.0}
\omega\phi_j^{(0)} = \frac34\phi_j^{(0)}\bigl|{\phi_j^{(0)}}\bigr|^2\ ,
\end{equation}
which is trivialy invariant under the action of $e^{\im\sigma}$. By
using \eqref{e.4Dtorus}, it provides the frequency $\lambda$ of the
orbit, and its detuning $\omega$ from the linear frequency 1, namely
\begin{equation}
\label{e.om_lam}
\omega = \frac34 R^2\quad\text{and}\quad \lambda =1+\frac34 R^2\ .
\end{equation}
The conservation law \eqref{e.current} at order zero gives
\begin{equation}
  \label{e.current.o0}
  J_j^{(0)} := \Im\Bigl(
    \phi_{j-1}^{(0)}\overline\phi_j^{(0)} +
    \phi_{j-3}^{(0)}\overline\phi_j^{(0)} +
    \phi_{j-2}^{(0)}\overline\phi_{j+1}^{(0)} +
    \phi_{j-1}^{(0)}\overline\phi_{j+2}^{(0)}
  \Bigr) = 0 \ .
\end{equation}
Recalling the form of the ansatz solution \eqref{e.4Dtorus},
eq.~\eqref{e.current.o0} is identically satisfied for $j\not\in S$.
Instead, for $j\in S$ we get
  \begin{equation}
    \label{e.shifts}
\sin\tond{\varphi_1} = \sin\tond{\varphi_2} = \sin\tond{\varphi_3} = 
- \sin\tond{\varphi_1 + \varphi_2 + \varphi_3}\ .
\end{equation}

\begin{remark}
The above systems of equations for the phase shifts can be obtained
with a different procedure, namely the leading order approximation of
a variational argument (see \cite{Kap01,Kev09}), which applies to
system with symmetries. The solutions can indeed be obtained as
critical points of the perturbed energy
\begin{displaymath}
H_1(\psi) =
\eps\sum_j\quadr{|\psi_{j+1}-\psi_j|^2+|\psi_{j+3}-\psi_j|^2} \ ,
\end{displaymath}
restricted to the unperturbed solution $\phi^{(0)}e^{-\im\lambda t}$
(or averaged over the unperturbed periodic orbit
$\phi^{(0)}e^{-\im\lambda t}$), giving in our case
\begin{equation}
\label{e.F}
F(\varphi_1,\varphi_2,\varphi_3) := H_1(\phi^{(0)}e^{-\im\lambda t}) =
-2R^2\tond{\cos\tond{\varphi_1} + \cos\tond{\varphi_2} +
  \cos\tond{\varphi_3} + \cos\tond{\varphi_1+\varphi_2+\varphi_3}}\ ,
\end{equation}
apart from useless constant terms depending on $R$. However, in our
perspective, it is more important to stress that the system
\eqref{e.shifts} also represents the first term of the bifurcation
equation (in the Lyapunov-Schmidt decomposition) expanded both in
$\eps$ and in the kernel variables (see \eqref{e.bif.point}).
\end{remark}

The first of~\eqref{e.shifts} provides four, one-parameter, families
of solutions
$(\varphi_1,\varphi_2(\varphi_1),\varphi_3(\varphi_1))$. By replacing
$\varphi_1$ by $\varphi$ we get precisely
\begin{equation}
(\varphi,\varphi,\varphi)\ ,\quad (\varphi,\varphi,\pi-\varphi)\ ,\quad
  (\varphi,\pi-\varphi,\varphi)\ ,\quad (\varphi,\pi-\varphi,\pi-\varphi)\ . 
\label{families}
\end{equation}

By plugging the first one of (\ref{families}), i.e., $(\varphi,\varphi,\varphi)$ in the
second equation of \eqref{e.shifts} we get $\sin\tond{\varphi} =
-\sin\tond{3\varphi}$, thus we obtain
\begin{displaymath}
(\varphi_1,\varphi_2,\varphi_3) \in \left\{
  (0,0,0)\ , (\pi,\pi,\pi)\ , \pm\tond{\frac\pi2,\frac\pi2,\frac\pi2}
\right\}\ .
\end{displaymath}
The first two solutions represent the discrete in-phase and the
alternating-phase solitons, while the other two are the
discrete vortex solitons (the phase shift solutions).
In the latter cases, i.e. the solution we
call symmetric vortices, we will conveniently use the following choice
of phases:
\begin{equation}
  \label{e.vortex}
  \left( \theta_1,\theta_2,\theta_3,\theta_4 \right) =
  \pm \left( 0,\frac\pi2,\pi,\frac{3\pi}2 \right)\ .
\end{equation}

The remaining three families, when plugged in the second of
\eqref{e.shifts} give the identity $\sin\tond{\varphi_1} =
\sin\tond{\varphi_1}$, hence we get the following three 1-parameter
families of solutions which are referred to as asymmetric vortices
(see~\cite{PelKF05b} for a similar example in the nearest neighbor case)
\begin{equation}
  \label{e.sol-shifts}
  F_1: (\varphi,\pi-\varphi,\pi-\varphi) \ ,\quad
  F_2: (\varphi,\varphi,\pi-\varphi)     \ ,\quad
  F_3: (\varphi,\pi-\varphi,\varphi)     \ .
\end{equation}
Remarkably, the three families intersect in the two previously
obtained symmetric vortex solutions $ \pm\tond{\frac\pi2,\frac\pi2,\frac\pi2}$,
while the in-phase/alternating phase solutions do not belong to any of
these families, thus being isolated solutions. Moreover, the above
families carry also all the other 3 couples of solutions with phase
shifts $\varphi_j\in\graff{0,\pi}$.

The above considerations provide us with the complete list of the
phase shift, discrete vortex solutions that are admissible for the
continuation with respect to $\eps$. On a three-dimensional torus
$\mathbb{T}^3$, representing all the possible phase differences (which
means, the original $\mathbb{T}^4$ with a quotient with respect to the
gauge symmetry), we have 2 isolated solutions, and three 1-parameter
families intersecting in the 2 symmetric vortex solutions. Thus, the 2
vortex solutions will be fully degenerate, while any other asymmetric
vortex solution on a family will be partially degenerate. This can be
seen explicitly by using the approach developed in
\cite{Kap01}. Indeed, it turns out that $(0,0,0)$ and $(\pi,\pi,\pi)$
are non degenerate (absolute) extrema -- respectively a maximum and a
minimum -- of $F$ (see \eqref{e.F} above) on the torus $\mathbb{T}^3$,
while the mixed standard solution are partially degenerate relative
extrema. The two symmetric vortex solutions are fully degenerate, since the
Hessian $\tond{D^2F}\tond{\pm \tond{\frac\pi2,\frac\pi2,\frac\pi2}}$
is the null matrix.  This scenario in the 2d, nearest-neighbor dNLS
was dubbed the super-symmetric case~\cite{PelKF05b}.  In all the other
asymmetric vortex solutions, the Hessian $\tond{D^2F}$ has only 1 zero
eigenvalue which corresponds to the tangent direction to the family
one considers Notice that the gauge symmetry direction is absent from
$F$, given its dependence on the variables $\varphi$.  This fact can
also be verified by a direct calculation.

\begin{remark}
The above analysis bears extensive similarities with the study of the vortex
solutions on a squared-lattice 2D dNLS model, performed in
\cite{PelKF05b}. The above classifications of 1-parameter families
\eqref{e.sol-shifts} represents indeed the families (3.8)-(3.10) in
that work.
\end{remark}


\subsection{First and second order expansions}

The strategy we are going to follow is based on the expansion in
$\epsilon$ up to second order of both the stationary equation
\eqref{e.dNLS.phi} and the density current \eqref{e.current}. Hence we
start providing here the explicit expansions that will be used in the
following parts.

\subsubsection{First order equation's structure:}

The equation \eqref{e.dNLS.phi} at first order reads
\begin{equation*}
\omega\phi_j^{(1)} = -\frac12\tond{\phi^{(0)}_{j+3} +
  \phi^{(0)}_{j+1} + \phi^{(0)}_{j-1} + \phi^{(0)}_{j-3}} +
2\phi^{(0)}_j + \frac34\tond{2\phi_j^{(1)}\bigl|\phi_j^{(0)}\bigr|^2 +
  \bigl({\phi_j^{(0)}}\bigr)^2\overline{\phi}_j^{(1)}}\ ;
\end{equation*}
which, separating between the core sites of the discrete soliton and the external
ones, becomes
\begin{align}
  -\omega\tond{\phi_j^{(1)} + e^{2\im\theta_j}\overline{\phi}_j^{(1)}} &=
  -\frac12\tond{\phi^{(0)}_{j+3} + \phi^{(0)}_{j+1} + \phi^{(0)}_{j-1}
  + \phi^{(0)}_{j-3}} + 2\phi^{(0)}_j \ ,
  & j\in S\ , \label{e.order.1.S}
\\
  \omega\phi_j^{(1)} &= -\frac12\tond{\phi^{(0)}_{j+3} +
  \phi^{(0)}_{j+1} + \phi^{(0)}_{j-1} +
  \phi^{(0)}_{j-3}} \ ,
 & j\not\in S\ . \label{e.order.1.notS}
\end{align}
Equation \eqref{e.order.1.S} is a linear equation in $\phi_j^{(1)}$
with a 1-dimensional kernel for any $j\in S$ given by
\begin{equation}
\label{e.Kern.j}
\phi_{j,Ker}^{(1)} = \im e^{\im \theta_j} \ ,
\end{equation}
and following the ideas of \cite{PelKF05b}, part 2, we can find a
particular solution of \eqref{e.order.1.S} in the form
\begin{equation}
  \label{e.order.1.S.sol}
  \phi_j^{(1)} = e^{\im\theta_j} u_j^{(1)} \ ,
  \qquad
  u_j^{(1)}\in\RR \ .
\end{equation}

\subsubsection{Second order equation's structure:}

Equation \eqref{e.dNLS.phi} at second order, directly split between core
and external sites, reads
\begin{align}
  -\omega\tond{\phi_j^{(2)} + e^{2\im\theta_j}\overline{\phi}_j^{(2)}} =&
  -\frac12\tond{
    \phi^{(1)}_{j+3} + \phi^{(1)}_{j+1} + \phi^{(1)}_{j-1} + \phi^{(1)}_{j-3}}
  + 2\phi_j^{(1)} +  \nonumber
\\
  &\quad-\frac34\tond{
    2\phi_j^{(0)}\bigl|\phi_j^{(1)}\bigr|^2 + \bigl({\phi_j^{(1)}}\bigr)^2\overline{\phi}_j^{(0)}
  } \ ,
  & j\in S\ , \label{e.order.2.S}
\\
  \omega\phi_j^{(2)} =& -\frac12\tond{
    \phi^{(1)}_{j+3} + \phi^{(1)}_{j+1} + \phi^{(1)}_{j-1} + \phi^{(1)}_{j-3}} +
  2\phi_j^{(1)} \ , 
 & j\not\in S\ . \label{e.order.2.notS}
\end{align}
The kernel part associated to \eqref{e.order.2.S} being the same as in
\eqref{e.order.1.S}, gives
\begin{equation}
  \label{e.order.2.S.sol}
  \phi_j^{(2)} = e^{\im\theta_j} u_j^{(2)} \ ,
\qquad
  u_j^{(2)}\in\RR \ .
\end{equation}

\subsubsection{Density current expansion:}

The conservation law \eqref{e.current} at first and second orders, respectively
reads
\begin{equation}
  \label{e.current.o1}
  \begin{aligned}
  J_j^{(1)} := \Im\Bigl( &
    \phi_{j-1}^{(0)}\overline\phi_j^{(1)} +
    \phi_{j-3}^{(0)}\overline\phi_j^{(1)} +
    \phi_{j-2}^{(0)}\overline\phi_{j+1}^{(1)} +
    \phi_{j-1}^{(0)}\overline\phi_{j+2}^{(1)} + \Bigr.
\\
  &\quad+  \Bigl.
    \phi_{j-1}^{(1)}\overline\phi_j^{(0)} +
    \phi_{j-3}^{(1)}\overline\phi_j^{(0)} +
    \phi_{j-2}^{(1)}\overline\phi_{j+1}^{(0)} +
    \phi_{j-1}^{(1)}\overline\phi_{j+2}^{(0)} \Bigr) = 0 \ ,
  \end{aligned}
\end{equation}

\begin{equation}
  \label{e.current.o2}
  \begin{aligned}
    J_j^{(2)} := \Im\Bigl( &
      \phi_{j-1}^{(0)}\overline\phi_j^{(2)} +
      \phi_{j-1}^{(1)}\overline\phi_j^{(1)} +
      \phi_{j-1}^{(2)}\overline\phi_j^{(0)} +
      \phi_{j-3}^{(0)}\overline\phi_j^{(2)} +
      \phi_{j-3}^{(1)}\overline\phi_j^{(1)} +
      \phi_{j-3}^{(2)}\overline\phi_j^{(0)} + \Bigr.
\\
  & \quad+\Bigl.
      \phi_{j-1}^{(0)}\overline\phi_{j+2}^{(2)} +
      \phi_{j-1}^{(1)}\overline\phi_{j+2}^{(1)} +
      \phi_{j-1}^{(2)}\overline\phi_{j+2}^{(0)} +
      \phi_{j-2}^{(0)}\overline\phi_{j+1}^{(2)} +
      \phi_{j-2}^{(1)}\overline\phi_{j+2}^{(1)} +
      \phi_{j-2}^{(2)}\overline\phi_{j+2}^{(0)} \Bigr) = 0\ .
  \end{aligned}
\end{equation}


\subsection{The $(0,0,0)$ solution}

In what follows, we start showing how our approach is
implemented in the easiest case of
a non degenerate discrete soliton, corresponding to
$\theta_1=\theta_2=\theta_3=\theta_4$ where the compatibility
of the solution expansion with the density current expansion persists.  This is
a case where the implicit function theorem can be applied
to continue the unperturbed solution, even without exploiting the
restriction to real solutions of \eqref{e.dNLS.phi}. However,
we think it is instructive to start to
apply here our approach, in order to show the differences with respect to the
(partial and fully) degenerate cases which will follow.

\paragraph{Order 1:}
For the sake of simplicity we may chose $(\theta_1, \theta_2,
\theta_3, \theta_4)=(0, 0, 0, 0)$. By inserting the zeroth order of
the solution \eqref{e.4Dtorus} into \eqref{e.order.1.S} and
\eqref{e.order.1.notS}, for the specific choice of $\theta_j$, we get
e.g. for $j\in S$
$$-2\omega
\Re(\phi_j^{(1)})=R\Leftrightarrow\Re(\phi_j^{(1)})=-\frac{2}{3R}.$$
In a similar way we get the first order of the solution for the rest
of the sites. The results for all the sites are
\begin{equation*}
\begin{array}{l || cccc|c|cccc }
j & \cdots & -2 & -1 & 0 & j\in S & 5 & 6 & 7 & \cdots
\\
\hline
-\phi_j^{(1)} &
0 & \frac2{3R} & \frac2{3R} & \frac4{3R}
& \frac2{3R} - \im \alpha_j\
& \frac4{3R} & \frac2{3R} & \frac2{3R} & 0
\end{array},
\end{equation*}
where the four $\alpha_j\in\RR$
represent the four independent kernel directions. The conservation
law \eqref{e.current.o1} is sastisfied, provided the $\alpha_j$
fulfill the following set of linear homogeneous equations
\begin{displaymath}
\begin{cases}
2\alpha_1-\alpha_2-\alpha_4= 0\\
\alpha_1+\alpha_2-\alpha_3-\alpha_4=0\\
\alpha_1+\alpha_3-2\alpha_4=0
\end{cases}
\qquad\Longrightarrow\qquad
\alpha_1=\alpha_2=\alpha_3=\alpha_4\ .
\end{displaymath}
Thus, the first order expansion of the density current removes all the
kernel directions but one, as expected by the non-degeneracy. The
remaining kernel direction ${\bf \alpha}=\alpha_1(1,1,1,1)$ represents
the effect of the rotational symmetry, which cannot be removed from the
perturbation expansion.

\paragraph{Order 2:} As before, using now 
\eqref{e.order.2.S} and \eqref{e.order.2.notS}, one has (limiting to
the sites $-2\leq j\leq 7$)
\begin{equation*}
\begin{array}{l || ccc|c|ccc }
j  & -2 & -1 & 0 & j\in S & 5 & 6 & 7
\\
\hline
\rule{0pt}{3.5ex}
\phi_j^{(2)} 
& -\frac{(8+6\im\alpha_1 R)}{9R^3}
& -\frac{2\im\alpha_1}{3R^2} & -\frac{(20+12\im\alpha_1 R)}{9R^3}
& -\frac{(16+9\im\alpha_j^2 R^2)}{18R^3} + \im\alpha_j'
& -\frac{(20+12\im\alpha_1 R)}{9R^3} & -\frac{2\im\alpha_1}{3R^2} 
& -\frac{(8+6\im\alpha_1 R)}{9R^3}
\end{array}
\end{equation*}

If we insert the second order corrections in the conservation law
\eqref{e.current.o2}, we get the same system of linear homogeneous
equations obtained at first order in the new kernel variables $\alpha_j'$
\begin{displaymath}
\begin{cases}
2\alpha_1'-\alpha_2'-\alpha_4'= 0\\
\alpha_1'+\alpha_2'-\alpha_3'-\alpha_4'=0\\
\alpha_1'+\alpha_3'-2\alpha_4'=0
\end{cases}
\qquad\Longrightarrow\qquad
\alpha'_1=\alpha'_2=\alpha'_3=\alpha'_4\ ,
\end{displaymath}
whose solution is independent of $\alpha_1$ and again leaves the
symmetry direction ${\bf \alpha'}=\alpha'_1(1,1,1,1)$ as the only
kernel direction.

Although it is not feasible to proceed explicitly in the expansion,
in this non-degenerate case it is easy to figure out that, at any
order, the conservation law would produce always the same system of
linear homogeneous equation in the new variables. Thus one can
recursively and uniquely determine all the needed coefficients,
leaving as free parameter only the gauge direction, as expected
from the symmetry of the system.

\begin{remark}
{\bf The $(\pi, \pi, \pi)$ family.} By using similar arguments as with
the $(0, 0, 0)$ family we also conclude that this family is also
non-degenerate and thus it can be iteratively constructed order by
order.

\end{remark}


\subsection{The asymmetric $(\varphi,\pi-\varphi,\pi-\varphi)$ and
  $(\vphi,\vphi,\pi-\vphi)$ vortex solutions}
\label{ss:asym3}

We now deal with the problem of continuing any of the phase-shift
solutions of the first two families of \eqref{e.sol-shifts}, except
$\pm(\frac{\pi}{2},\frac{\pi}{2},\frac{\pi}{2})$.  Concerning the
first one the perturbation expansion goes as follows.

\paragraph{Order 1:}
As before we represent the solution in the following table
\vskip-12pt
\renewcommand*{\arraystretch}{1.5}
\begin{equation*}
\overbrace{
  \begin{array}{|c|c|c|c|}
    1 & 2 & 3 & 4
    \\
    \hline
      -\frac4{3R} + \frac2{3R}\cos(\vphi) + \im \alpha_1
    & -\frac{4}{3R} e^{\im \vphi} + \im \alpha_2 e^{\im \vphi}
    & \frac{4}{3R} +\frac2{3R}\cos(\vphi) - \im \alpha_3
    & -\frac{4}{3R} e^{-\im \vphi} + \im \alpha_4 e^{-\im \vphi}
  \end{array}
}
^{\hskip18pt
  \begin{array}{l || ccc|c|ccc }
    j & -2 & -1 & 0 & \cdots & 5 & 6 & 7
    \\
    \hline
    \phi_j^{(1)}
    & -\frac2{3R} & -\frac2{3R} e^{\im \vphi} & 0
    & \cdots
    & -\frac4{3R}\cos(\vphi) & \frac2{3R} & -\frac2{3R} e^{-\im \vphi}
  \end{array}
}
\end{equation*}
\renewcommand*{\arraystretch}{1}

The conservation law \eqref{e.current.o1} is satisfied, provided
$\alpha_j$ fulfill 
\vskip-12pt
\begin{equation}
  \label{e.M1.fam3}
  M\cdot \pmb{\alpha}=0 \ ,
\qquad
  M:=\cos(\vphi)
  \begin{bmatrix}
    2 & -1 &  0 & -1\cr
    1 & -1 &  1 & -1 \cr
    1 &  0 & -1 &  0 \cr
  \end{bmatrix} \ ,
\quad 
  \pmb{\alpha}:=\begin{pmatrix}
  \alpha_1\\\alpha_2\\\alpha_3\\\alpha_4
  \end{pmatrix} \ .
\end{equation}
It is immediate to notice that for $\vphi=\pm\frac{\pi}2$ the system
is identically satisfied. This is the effect of the full degeneracy of
the vortex solutions, that will be treated separately later. So,
assuming $\vphi\not=\pm\frac{\pi}2$, we get
\begin{displaymath}
  \alpha_3=\alpha_1 \ ,
\quad
  \alpha_4=2\alpha_1 - \alpha_2 \ ,
\end{displaymath}
which leaves two Kernel directions in the problem: the gauge direction
and the tangent direction to the $\vphi$-family.

\paragraph{Order 2:} The expansion now proceeds as in the previous example,
computing the second order corrections $\phi_j^{(2)}$. For $j\in S$ in
particular the solutions can be taken in the form
\begin{displaymath}
  \phi_j^{(2)} = u_j^{(2)}(\alpha_1,\alpha_2)e^{\im\theta_j} + 
    \im \alpha'_j e^{\im\theta_j} \ ,
\end{displaymath}
with four new kernel directions $\alpha_j'$, as in the previous
example. We omit listing here the explicit expressions; once inserted
them into \eqref{e.current.o2}, we are left with three linear
equations, in this case nonhomogeneous, corresponding again to
$J^{(2)}_{2,3,4}=0$, which take the form
\begin{equation}
  \label{e.M2.fam3}
  M\cdot \pmb{\alpha'} = {\bf a} + {\bf b}(\pmb{\alpha}) \ ,
\qquad
  {\bf a}:= \frac{8}{9R^3}\sin(\vphi)
    \begin{pmatrix} 1\cr 1\cr 1\cr\end{pmatrix}\ ,
\quad
  {\bf b}(\pmb{\alpha}) := \frac4{3R^2}\cos(\vphi)^2\alpha_1
    \begin{pmatrix} 1\cr 0\cr 1\cr\end{pmatrix}\ .
\end{equation}
The rank of $M$ is equal to the rank of the augmented matrix $M|{\bf
  a} + {\bf b}(\pmb{\alpha})$ only if $\sin(\vphi) = 0$,
i.e. $\vphi=\graff{0,\pi}$, hence any possible $\C^2$ asymmetric
vortex solution in this family is excluded and only trivial phase
families are allowed.

Taking into account the second family, all the calculations are
essentially the same except from some permutation of indices, i.e. in
this case the matrix $M$ is
\begin{equation}
  \label{e.M1.fam1}
  M:=\cos(\vphi)
  \begin{bmatrix}
    0 & 1 & 0 & -1\cr
    1 & -1 & 1 & -1 \cr
    1 & 0 & 1 & -2 \cr
  \end{bmatrix} \ ,
\end{equation}
and ${\bf a}$ has a change in the sign, but the conclusions
are the same as before.


\subsection{The asymmetric $(\vphi,\pi-\vphi,\vphi)$ vortex solution}
\label{ss:asym2}

We proceed dealing with the problem of continuing any of the
phase-shift solutions of the last family of \eqref{e.sol-shifts},
again with the exclusion of the solutions
$\pm(\frac{\pi}{2},\frac{\pi}{2},\frac{\pi}{2})$. With the convenient
choice of ${\bf \theta} = \tond{0,\vphi,\pi,\pi+\vphi}$, the
perturbation expansion goes as follows.

\paragraph{Order 1:}
\vskip-12pt
\renewcommand*{\arraystretch}{1.5}
\begin{equation*}
\overbrace{
  \begin{array}{|c|c|c|c|}
    1 & 2 & 3 & 4
    \\
    \hline
      -\frac4{3R} + \im \alpha_1
    & -\frac{4}{3R} e^{\im \vphi} + \im \alpha_2 e^{\im \vphi}
    & \frac{4}{3R} - \im \alpha_3
    & \frac{4}{3R} e^{\im \vphi} - \im \alpha_4 e^{\im \vphi}
  \end{array}
}
^{\hskip-46pt
  \begin{array}{l || ccc|c|ccc }
    j & -2 & -1 & 0 & \cdots & 5 & 6 & 7
    \\
    \hline
    \phi_j^{(1)}
    & -\frac2{3R} & -\frac2{3R} e^{\im \vphi} & 0
    & \cdots
    & 0 & \frac2{3R} & \frac2{3R} e^{\im \vphi}
  \end{array}
}
\end{equation*}
\renewcommand*{\arraystretch}{1}
The conservation law \eqref{e.current.o1} is satisfied, provided
$\alpha_j$ fulfill
\begin{equation}
  \label{e.M1.fam2}
  M\cdot \pmb{\alpha}=0 \ ,
\qquad
  M:=\cos(\vphi)
  \begin{bmatrix}
    0 & 1 & 0 & -1\cr
    1 & 1 & -1 & -1 \cr
    1 & 0 & -1 & 0 \cr
  \end{bmatrix}\ .
\end{equation}
It is immediate to notice that for $\vphi=\pm\frac{\pi}2$ the system
is again identically satisfied, as a result of the full degeneracy of
the vortex solutions. So, assuming $\vphi\not=\pm\frac{\pi}2$, we get
\begin{displaymath}
\alpha_3= \alpha_1\ ,\quad
\alpha_4= \alpha_2\ .
\end{displaymath}

\paragraph{Order 2:}
As in the previous subsection we omit the explicit expression of the
solution $\phi_j^{(2)}$; we directly give the result of their use into
the density current conservation law \eqref{e.current.o2} which
reduces to the three equations corresponding to
$J_{2,3,4}^{(2)}=0$, which take the form
\begin{equation}
  \label{e.M2.fam2}
  M\cdot \pmb{\alpha'} = {\bf a} \ ,
\qquad
  {\bf a}:= -\frac{8}{9R^3}\sin(\vphi)
  \begin{pmatrix} 1\cr 1\cr 1\cr\end{pmatrix} \ .
\end{equation}
Again, the presence of the vector ${\bf a}$ implies $\sin(\vphi) = 0$,
hence $\vphi\in\graff{0,\pi}$, thus any possible $\C^2$ asymmetric
vortex solution in the third family is excluded.


\subsection{The vortex solutions}
\label{ss:vs}

The previous analysis has shown that only the phase shifts, i.e.,
$\vphi_j\in\graff{0,\pi}$, or the vortex solutions, i.e.,
$\vphi_j=\pm\frac\pi2$, have a chance to be continued as $\C^2$
solutions. To complete our analysis, we wish to
now show that also the vortex solutions cannot
be continued in $\eps$. This, however, requires a separate expansion
of the first two perturbation orders, due to the complete degeneracy
in~\eqref{e.M1.fam3} and
\eqref{e.M1.fam2}.

\paragraph{Order 1:} 

\noindent Let, for the sake of simplicity, $(\theta_1, \theta_2,
\theta_3, \theta_4)=(0, \pi/2, \pi, 3\pi/2)$ for $j\in S$; i.e.~the
equations for $\phi_j^{(1)}$ are in a form such that only their real
or imaginary parts appear, thus only half of the solution is
determined. For example, for $j=1$, remarking that $\phi^{(0)}_{2} +
\phi^{(0)}_{4}=0$, we have
\begin{equation*}
  -\omega\tond{\phi_1^{(1)} + \overline{\phi}_1^{(1)}} = 
  2R -\frac12\quadr{\phi^{(0)}_{2} + \phi^{(0)}_{4}}
\qquad\Longrightarrow\qquad
  \phi_1^{(1)} = -\frac4{3R}+\im\alpha_1\ ;
\end{equation*}
for the remaining values of $j$ the solution is completely
determined. As a result we have

\renewcommand*{\arraystretch}{1.5}
\begin{equation*}
  \begin{array}{l || ccc|c|c|c|c|ccc }
    j & -2 & -1 & 0 & 1 & 2 & 3 & 4 & 5 & 6 & 7
    \\
    \hline
    \phi_j^{(1)}
    & -\frac2{3R} & -\frac{2\im}{3R} & 0
    & -\frac4{3R} + \im \alpha_1
    & \alpha_2 - \im\frac{4}{3R}
    & \frac{4}{3R} + \im \alpha_3
    & \alpha_4 + \im \frac{4}{3R}
    & 0 & \frac2{3R} & \frac{2\im}{3R}
  \end{array}
\end{equation*}
\renewcommand*{\arraystretch}{1}

\begin{remark}
The four free parameters $\alpha_j$ represent the first order
expansion in $\eps$ of the kernel directions in the forthcoming
Lyapunov-Schmidt decomposition. Indeed, at this order, the equations
we are able to solve represent the range equations in the same
decomposition.
\end{remark}

In this case, and in contrast with the previous one, the density
current equations $J_j^{(1)}=0$ do not provide any further information
on the first order solutions $\phi_j^{(1)}$, with $j\in S$. Indeed,
the three equations $J_{2,3,4}^{(1)}=0$ give the trivial system
\begin{equation}
  \label{e.M1.vort}
  M \pmb{\alpha} = 0 \ , \qquad\text{with}\quad M:=0 \ ,
\end{equation}
so that the unknown $\alpha_j$ remain undetermined (see
also~\eqref{e.M1.fam3}, \eqref{e.M1.fam1} and~\eqref{e.M1.fam2} with
$\varphi=\pm\frac{\pi}{2}$).

\paragraph{Order 2:}
We face a similar situation as before, with the $\phi_j^{(2)}$, for
$j\in S$, appearing into the equations only through their real or
imaginary parts: thus we are left with 4 new parameters $\alpha_j'$,
corresponding to the real part of $\phi_{2,4}^{(2)}$ and to the
imaginary part of $\phi_{1,3}^{(2)}$, which are not determined, as
before. The other four components appear instead as functions of the
previous four free parameters $\alpha_j$. As before, for $j\not\in S$
no issues arise. To summarize, defining
\begin{equation*}
  f(x,y):=\frac{x+y}{3R^2}\ ,
\qquad
  g(x):=\frac{10}{9R^3} + \frac{x^2}{2R} \ .
\end{equation*}
and factorizing some constants in the values of the external sites, we
have

\renewcommand*{\arraystretch}{1.5}
\begin{multline*}
  \begin{array}{l || cccccc| }
    j & -5 & -4 & -3 & -2 & -1 & 0 
    \\
    \hline
    \rule{0pt}{3.5ex}
    \frac94R^3\phi_j^{(2)}
    & 1 & \im & 1
    & -2 +\im\tond{1-\frac{3R\alpha_1}2} & \tond{1-\frac{3R\alpha_2}2} -2\im
       & \im\tond{1-\frac{3R(\alpha_1+\alpha_3)}2}
  \end{array}
\\
  \begin{array}{l||c|c|}
    j & 1 & 2
    \\
    \hline
    \phi_j^{(2)}
    & \tond{f(\alpha_2,\alpha_4)-g(\alpha_1)} + \im \alpha'_1
    & \alpha'_2 + \im \tond{f(\alpha_1,\alpha_3)-g(\alpha_2)}
  \end{array}\phantom{XXXX}
\\
  \begin{array}{l||c|c|}
    j & 3 & 4
    \\
    \hline
    \phi_j^{(2)}
    & \tond{f(\alpha_2,\alpha_4)+g(\alpha_3)} + \im \alpha'_3
    & \alpha'_4 + \im \tond{f(\alpha_1,\alpha_3)+g(\alpha_4)}
  \end{array}
\\
  \begin{array}{l || ccc }
    j & 5 & 6 & 7
    \\
    \hline
    \rule{0pt}{3.5ex}
    \frac94R^3\phi_j^{(2)}
    & -\tond{1+\frac{3R(\alpha_2+\alpha_4)}2}
    & 2 -\im\tond{1+\frac{3R\alpha_3}2}
    & -\tond{1+\frac{3R\alpha_4}2} +2\im
  \end{array}
\end{multline*}
\renewcommand*{\arraystretch}{1}

We consider now the second order of the expansion of the conservation law
\eqref{e.current}. It turns out that once again all the equations $
J_n^{(2)} = 0$, for $n\not\in\graff{2,3,4}$, are identically
satisfied, providing no information on any of the eight free
parameters $\pmb{\alpha}$ and $\pmb{\alpha'}$. The only non trivial
equations are again
\begin{displaymath}
J_2^{(2)} = 0\ ,\quad J_3^{(2)} = 0\ ,\quad J_4^{(2)} = 0\ .
\end{displaymath}
After long and tedious manipulation that we here omit, one reaches the
following system of three quadratic equations which remarkably depend
only on the first order kernel variables $\alpha_j$, and not on the
second order variables $\alpha_j'$

\begin{equation}
  \label{e.current.o2.vort}
  \begin{cases}
    C = (b + d)(b - d + 2a) \ ,\\ 
    C = (a + c)(c - a + 2d) \ ,\\ 
    C = b^2 - d^2 + c^2 - a^2 + 2(ad - bc) \ ,
  \end{cases}
\quad\text{where}\quad
C:= \frac{16}{9R^2} \ ,
\quad
\begin{aligned}
  a&\equiv\alpha_1 \ , &b\equiv\alpha_2
  \\
  c&\equiv\alpha_3 \ , &d\equiv\alpha_4
\end{aligned}
  \ .
\end{equation}
The above system, by taking the differences of its equations, assume
the form
\begin{equation*}
\begin{cases}
  0 = (b + d)(d - b + 2c) \ ,\\ 
  0 = (a + c)(a - c + 2b) \ ,\\ 
  0 = (a + b)^2 - (c + d)^2 \ .
\end{cases}
\end{equation*}
The third equation implies either $a+b=c+d$ or $a+b=-(c+d)$, call them
case 1) and 2) respectively. Moreover in the first two equations one
has to exclude $b+d=0$ and $a+c=0$ because those would imply $C=0$
in~\eqref{e.current.o2.vort}. Thus necessarily we have
\begin{equation*}
\begin{cases}
  b - d = 2c \ ,\\ 
  c - a = 2b \ ,
\end{cases}
\quad\Longrightarrow\quad
\begin{cases}
  b + c = -(a + d) \ ,\\ 
  a + b - (c + d) = 2(c - b) \ .
\end{cases}
\end{equation*}
In case 1), the first on the right imply $a+b=c+d=0$, so that using
the second we have $b=c$, which gives $a=d$, so that we end up with
$b+d=a+c=0$, which we already excluded above since it is equivalent to
$C=0$. We are thus left with case 2). Here the contradiction is
obtained putting the system on the left in one of the first two
equations of~\eqref{e.current.o2.vort} to get $C=2(a+c)(b+d)$. Since
in case 2) we have $a+c=-(b+d)$ we would get $C<0$.
\\
\\
To summarize the system turns out to be impossible, thus proving the
incompatibility of the vortex solution.

\begin{remark}
After concluding the presentation of the perturbative approach of the
problem, the reader could assume that this methodology is only
applicable to the present system because of the uniquenss of the form
of \eqref{e.current}. Actually, this is not true, since this approach
can be very easily adapted in order to be used in every system of the
dNLS type with interactions beyond the nearest-neighbor ones and the
corresponding density current can be easily calculated. In fact the
method has been used in the zigzag ~\cite{PKSPK17kg} configuration,
where the corresponding vortex configurations also are excluded: in
this case the degeneracy occurs, however, in lower order than the
present example. Anyway, although the density current method is
applicable to any kind of linear interaction, the kind of degeneracy
is highly dependent on the number of sites considered in the
configuration.
\end{remark}


\section{Proof of Theorem~\ref{t.nonexistence} via
  Lyapunov-Schmidt decomposition}
\label{s:LS}

In the present Section we give the proof of
Theorem~\ref{t.nonexistence}, without the $C^2$ regularity restriction
of the previous section.  Consider again the stationary equation
\eqref{e.dNLS.phi}, now written as
\begin{equation}
  \label{e.dNLS.abst}
  \omega \phi + \eps L\phi - \frac34\phi|\phi|^2 = 0\ ,
\qquad\text{with}\quad
  L:=\frac12\quadr{\Delta_1+\Delta_3}\ ,
\end{equation}
and $\phi\in\ell^2(\CC)$. Since we are interested in the continuation
of vortex-like solutions from the anti-continuum limit, we introduce
the translation
\begin{equation}
  \label{e.dec.phi.1}
  w(\eps) := \phi(\eps) - v \ ,
\end{equation}
where $v$ is a fixed solution of the unperturbed equation, as in
formula~\eqref{e.4Dtorus}, and $w(\eps)$ represents a small
  displacement around it, namely
\begin{equation}
  \label{e.norm.w}
  v_j =
  \begin{cases}
    R e^{\im\theta_j} & j\in S\ ,
\\
    0              & j\not\in S\ ,
  \end{cases}
\qquad\text{and}\quad
  \norm{w}\ll \norm{v} \ .
\end{equation}

\begin{remark}
  \label{r.1}
  The change of coordinate \eqref{e.dec.phi.1} is one of the places
  where a key difference appears with respect to the previous
  section. Although it might be seen as the analogue of the
  decomposition~\eqref{e.exp.phi} --- with $v$ taking in the present
  Section the place of $\phi^{(0)}$, and $w$ the role of the higher
  order terms of such an expansion --- at variance
  with~\eqref{e.exp.phi}, in~\eqref{e.dec.phi.1} no regularity is
  assumed (apart from the obvious continuity), as $w$ is simply a
  small displacement. We will instead exploit the regularity at the
  level of the equations.
\end{remark}


Exploiting~(i) that $v$ is a generic solution of the unperturbed
problem and~(ii) that $w$ is small in $\epsilon$, we
split~\eqref{e.dNLS.abst} in powers of $w$ and define the following
four functions
\begin{equation}
  \label{e.F.1}
  \begin{aligned}
    F(v;w,\eps) := \eps Lv +
      \bigl( & \tikzmark{Lambda1}\Lambda w + \eps Lw \bigr) - 
      \bigl( \tikzmark{N21}N_2(v;w) + \tikzmark{N31}N_3(w) \bigr) \ ,
  \\
    \noalign{\vskip1pt}
    &\phantom{\Lambda w + \eps Lw \bigr) - \bigl(N_2(v;w) + \null}
    \tikzmark{N32}N_3(w) := \frac34|w|^2 w
  \\
    &\phantom{\Lambda w + \eps Lw \bigr) - \bigl(}
    \tikzmark{N22}N_2(v;w) := \frac34\tond{w^2\overline v + 2v|w|^2}
  \\
    & \tikzmark{Lambda2}\Lambda w :=
      \omega w - \frac34\tond{v^2\overline{w} + 2|v|^2 w}\ .
  \end{aligned} 
  \begin{tikzpicture}[overlay,remember picture]
    \node (Lam1) [below of = Lambda1, node distance = .1 em, anchor=west]{};
    \node (Lam2) [above of = Lambda2, node distance = .8 em, anchor=west]{};
    \draw[->,dotted,in=90, out=-90] (Lam1) to (Lam2);
    \node (NN21) [below of = N21, node distance = .1 em, anchor=west]{};
    \node (NN22) [above of = N22, node distance = .8 em, anchor=west]{};
    \draw[->,dotted,in=90, out=-90] (NN21) to (NN22);
    \node (NN31) [below of = N31, node distance = 0. em, anchor=west]{};
    \node (NN32) [above of = N32, node distance = .6 em, anchor=west]{};
    \draw[->,dotted,in=90, out=-90] (NN31) to (NN32);
  \end{tikzpicture}
\end{equation}
Thus~\eqref{e.dNLS.abst} takes the form
\begin{equation}
  \label{e.dNLS.dec.1}
  F(v;w(\eps),\eps)= 0\ .
\end{equation}
The linear part has been split into two terms since 
\begin{equation}
  \label{e.lambda}
  \Lambda = \bigl(D_w F\bigr) (v;0,0) \ ,
\end{equation}
being independent of $\epsilon$, is the operator one has to look at for the continuation of $v$ as
a solution $F(v;0,0)=0$. It is useful, considering the shape of $v$
given in \eqref{e.norm.w}, to represent $\Lambda$ in a matrix form
\begin{equation}
  \label{e.lambda.matrix}
  \Lambda = \begin{pmatrix}
    \omega\Id & 0 & 0   \\
    0 & -\omega M_S & 0 \\
    0 & 0 & \omega \Id  \\
  \end{pmatrix}
\end{equation}
where $M_S$ is a block matrix, composed of 4 blocks $M_{S,j}$, each
in the form
\begin{equation}
  \label{e.MS}
  M_{S,j} := \begin{pmatrix}
    2\cos^2(\theta_j) & \sin(2\theta_j)\\
    \sin(2\theta_j) & 2\sin^2(\theta_j) \\
  \end{pmatrix}\ .
\end{equation}
Each block $M_{S,j}$ has zero determinant, its one dimensional kernel
being given by
\begin{equation}
  \label{e.ej}
  Ker\tond{M_{S,j}} = \inter{e_j}\ ,\qquad e_j := 
  \begin{pmatrix}
    -\sin(\theta_j)  \\
    \cos(\theta_j) \\
  \end{pmatrix} \equiv \im e^{\im \theta_j}\ ;
\end{equation}
hence, the differential $\Lambda$ has a four dimensional kernel, as
expected by the four dimensional tangent space of $\mathbb{T}^4$,
which is given by
\begin{displaymath}
  Ker\tond{\Lambda} = \inter{f_1,f_2,f_3,f_4}\ ,
\end{displaymath}
where each $f_j$ is the embedding of the corresponding $e_j$ in the
$2N$ dimensional real phase space (or in the $N$ dimensional complex
phase space). Hence $\Lambda$ is not invertible.

\subsection{Lyapunov-Schmidt decomposition}

Given the above consideration on the operator $\Lambda$, instead of
the standard implicit function theorem approach, we need to perform a
Lyapunov-Schmidt decomposition.

We denote the kernel of $\Lambda$ as $K$, and its range as $H$, and
$\Pi_K$ and $\Pi_H$ the corresponding projectors.  Due to the
structure of the phase space\footnote{This is trivial in the finite
  dimensional case; in the infinite dimensional case, one has to
  notice that $F:\ell^2\rightarrow\ell^2$.} $\Ph$, we can identify
$H=K^\perp$ and $\Ph = K\oplus H$. We consider the decomposition
\begin{displaymath}
  w = k+h \ ,
\qquad\hbox{with}\quad
  k\in K 
\quad\hbox{and}\quad
  h\in H \ ,
\end{displaymath}
and we project \eqref{e.dNLS.dec.1} onto the two spaces
\begin{align*}
  F_H(v;h,k,\eps):=\Pi_HF\tond{v;h+k,\eps} = 0 \ ,
\\
  F_K(v;h,k,\eps):=\Pi_KF\tond{v;h+k,\eps} = 0 \ .
\end{align*}

The strategy is, as usual, to first solve the range equation,
exploiting the inveritbility of $\Lambda$ on $H$, so as to find an
$h(v;k,\eps)$ such that $F_H(v;h(v;k,\eps),k,\eps)=0$; then to insert
such a solution into the kernel equation.

\begin{remark}
  As anticipated in Remark~\ref{r.1}, we stress once more that here we
  are not assuming any regularity of the solution. We will
  nevertheless expand $h$ since in the range equation $F_H=0$, the
  function $F_H$ is regular, and thus a solution obtained via the
  implicit function theorem preserves such a regularity. The possible
  non-regularity of the solution may take place in the dependence of
  $k$ on $\eps$ at the level of the kernel equation.
\end{remark}

We perform a preliminary simplification of the two equations above by
observing the following elementary facts. First of all, we clearly have
$\Lambda(h+k) = \Lambda h$; moreover the orthogonality of the kernel
and range subspaces implies that we have $\Re (k\bar h)=0$, simplifying
the nonlinear terms; and in particular, from~\eqref{e.ej}
and~\eqref{e.norm.w}, we get $\Re (e_j\bar v_j)=0$
for $j\in S$, which confirms that $v\in H$. Furthermore,
exploiting again that within $S$ one has that $v_j$ and $h_j$ are
directed as $e^{\im\theta_j}$ and $k_j$ is parallel to $\im
e^{\im\theta_j}$, it is easy to make the projections, e.g. as in $\bar
v \tond{h^2+k^2}=v\tond{|h|^2-|k|^2}\in H$ and $2\bar v hk=2\Re(v\bar
h)k\in K$. Thus we have

\begin{align}
\label{e.Range}
  \Lambda h + &\eps\Pi_H\bigl(L(h+k+v)\bigr)
  - \frac34 \Bigl( \bigl( 3|h|^2 + |k|^2 \bigr) v +
    \bigl( |h|^2 + |k|^2 \bigr) h \Bigr) = 0 \ ,
  \tag{R}
\\
\label{e.Kernel}
  &\eps\Pi_K\bigl(L(h+k+v)\bigr) -
  \frac34 \Bigl({|h|^2 + |k|^2 + 2\Re\bigl(v\bar h\bigr)}\Bigr) k = 0 \ .
  \tag{K}
\end{align}


\subsection{Range equation}

As we commented above, the solvability of the range equation is not an
issue.  The interesting point is instead to shed some light on the
general structure of the solution that we will exploit later
on. According to such a purpose, and recalling that $h(v;k,\eps)$ is
regular in $k$ and $\eps$, we start by expanding it as follows
\begin{equation}
  \label{e.h.eps}
  h = h^{(0)}(v;k) + \eps h^{(1)}(v;k) + \eps^2h^{(2)}(v;k) + 
    \tilde h^{(3)}(v;k,\eps)\ .
\end{equation}
Inserting the above equation in~\eqref{e.Range}, we can produce explicit
expressions for the above expansion by solving the range equation
iteratively. Before proceeding into such a task, it is also useful to
notice here that the local terms can be further simplified splitting
between core sites ($j\in S$) and the other ones: in particular the
operator $\Lambda$, once applied to an element of the range $h\in H$,
simply becomes
\begin{displaymath}
\Lambda h = \begin{cases}
  \phantom{-2}\omega h     & j\not\in S\ ,  \\
  -2          \omega h     & j\in S\ ,      
\end{cases} 
\end{displaymath}
and the nonlinear part of the equation takes the form
\begin{displaymath}
\Pi_H(N_2+N_3)=\frac34 \begin{cases}
  |h|^2 h                                     & j\not\in S \ ,   \\
  \tond{|h|^2 + |k|^2}\tond{v+h} + 2|h|^2v     & j\in S \ .
\end{cases} 
\end{displaymath}


\paragraph{Order 0:} The order 0 component of the range equation takes the
form

\begin{equation*}
  \Lambda h^{(0)} - \frac34 \tond{ \bigl({\bigl|{h^{(0)}}\bigr|^2 +
      |k|^2}\bigr)\bigl({v + h^{(0)}}\bigr) + 2 \bigl|{h^{(0)}}\bigr|^2 v } = 0 \ ,
\end{equation*} 
which can be split, after further simplifications, and introducing for
the core sites ($j\in S$) the temporary notation $h_j^{(0)}=\ell
e^{\im\theta_j}$, with $\ell\in\RR$, as

\begin{align}
\notag
  h^{(0)} \tond{R^2 - \ell^2 } &= 0      & j\not\in S,
\\
\label{e.Range.0}
  \bigl({v + h^{(0)}}\bigr) \tond{\ell^2 + |k|^2 + 2R\ell}
    &= 0                   & j\in S.
\end{align}
Recalling the smallness condition~\eqref{e.norm.w}, we get
\begin{align*}
  h^{(0)}&=0                                      & j\not\in S,
\\
  \ell&=\sqrt{R^2-|k|^2}-R             & j\in S,
\end{align*}
so that, introducing an expansion in $|k|$, we may conveniently write

\begin{equation}
  \label{e.h.02}
\begin{aligned}
  h^{(0)}(v;k) = h^{(0,2)} & + \mathcal{O}(|k|^3)\ , &
\\
  h^{(0,2)} &= -\frac{3/4|k|^2v}{2\omega} = 
             \frac34\Lambda^{-1}\tond{ |k|^2 v}   &\quad j\in S\ .
\end{aligned}
\end{equation}

\paragraph{Order 1:} At order 1 we have

\begin{equation}
  \label{e.Range.1}
  \Lambda h^{(1)} + \Pi_H L \bigl({h^{(0)} +  k +  v}\bigr)
  -\frac34\tond{ 2\Re\bigl({h^{(0)}\overline{h}^{(1)}}\bigr)\bigl({3v + h^{(0)}}\bigr)
    +\bigl({\bigl|{h^{(0)}}\bigr|^2 + |k|^2}\bigr) h^{(1)} } = 0 \ .
\end{equation}
We now expand $h^{(1)}$ up to order one in the variables $k$ , namely
\begin{displaymath}
  h^{(1)}(v;k) = h^{(1,0)}(v) + h^{(1,1)}(v;k) + \mathcal{O}(|k|^2) \ ;
\end{displaymath}
replacing $h^{(1)}$ in~\eqref{e.Range.1} with the latter expansion and
recalling that we already obtained that $h^{(0)}$ is zero up to order
two in $k$, we are left with the following equations,
\begin{align}
  \label{e.h.1.0}
  \Lambda h^{(1,0)} + \Pi_HLv &= 0 
  \qquad\Longrightarrow\qquad
  h^{(1,0)} = -\Lambda^{-1}\Pi_HLv \ ;
\\
  \label{e.h.1.1}
  \Lambda h^{(1,1)} + \Pi_HLk &= 0 
  \qquad\Longrightarrow\qquad
  h^{(1,1)} = -\Lambda^{-1}\Pi_HLk \ .
\end{align}

\begin{remark}
The leading order solution $h^{(1,0)}(v)$ is nothing but the first
term $\phi^{(1)}$ obtained as solution of \eqref{e.order.1.S} in the
first part of the paper, with the exception of the kernel directions
$\alpha_j$ that cannot appear in the range part.
\end{remark}

\paragraph{Order 2:} The expansion of the range equation at order two
in $\eps$ gives
\begin{equation}
\label{e.Range.2}
\begin{aligned}
\Lambda h^{(2)} + \Pi_H L h^{(1)}
  -\frac34\biggl( 
    \tond{\bigl|{h^{(1)}}\bigr|^2 + 2\Re\bigl({h^{(0)}\overline{h}^{(2)}}\bigr)}&
      \tond{3v + h^{(0)}} 
\\
    + 2\Re\bigl({h^{(0)}\overline{h}^{(1)}}\bigr) &h^{(1)} + \bigl({\bigl|{h^{(0)}}\bigr|^2 + |k|^2}\bigr)  h^{(2)}
  \biggr) = 0 \ .
\end{aligned}
\end{equation}
Here, we need only the leading order of the expansion of $h^{(2)}$ in
powers of $k$, namely
\begin{equation}
  \label{e.h.2.taylor}
  h^{(2)} = h^{(2,0)}(v) + \mathcal{O}(|k|) \ ;
\end{equation}
and recalling that
\begin{displaymath}
  h^{(0)} = \mathcal{O}(|k|^2) \ ,
\qquad
  h^{(1)} = h^{(1,0)}(v) + \mathcal{O}(|k|) \ ,
\end{displaymath}
from \eqref{e.Range.2} we obtain, at leading order,
\begin{equation}
  \label{e.h.2.0}
  h^{(2,0)}(v) = -\Lambda^{-1} \tond{
    \Pi_H L h^{(1,0)} - \frac94 \bigl|{h^{(1,0)}}\bigr|^2 v
  } \ .
\end{equation}


\subsection{Kernel equation}

We consider now the kernel equation~\eqref{e.Kernel}, where we insert
the solution $h(v;k,\eps)$ of the range equation~\eqref{e.Range}, so
that we have to deal with
\begin{displaymath}
  F_K(v;k,\eps) := \Pi_K F(v;k,h(v;k,\eps),\eps)=0 \ ,
\qquad
  F_K(v;\cdot,\cdot):\RR^4\times\RR\to\RR^4 \ .
\end{displaymath}
We recall that, given a solution of such an equation for $\eps=0$, our
aim is to look for its continuation for $\eps\not=0$. In particular,
taking $v$ as in~\eqref{e.norm.w}, we require $w(\eps)$
as introduced in~\eqref{e.dec.phi.1} to solve~\eqref{e.dNLS.dec.1}
with the property $\lim_{\eps\to0}w(\eps)=0$ to guarantee that our
solution is indeed a continuation from $v$. The first relevant remark
is given by the following
\begin{lemma} Let $v\in\mathbb{T}^4$ and $k\in\RR^4$, then
\begin{equation}
  F_K(v;k,0)=0 \ .
\end{equation}
Thus, trivially, all the derivatives in $k$ vanish in zero, i.e.,
$D_k^{(m)}F_K(v;0,0)=0$.
\end{lemma}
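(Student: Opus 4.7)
The plan is to exploit the simplification of the kernel equation~\eqref{e.Kernel} at $\eps=0$ together with the explicit zero-order range solution already derived in~\eqref{e.Range.0}. Setting $\eps=0$ in~\eqref{e.Kernel}, I would first observe that the equation collapses to
\begin{equation*}
F_K(v;k,0) = -\frac34\tond{\bigl|h^{(0)}\bigr|^2 + |k|^2 + 2\Re\bigl(v\,\overline{h^{(0)}}\bigr)}\,k,
\end{equation*}
where $h^{(0)} = h^{(0)}(v;k)$ is the zero-order part of the range solution. This is a site-wise multiplication of $k$ by a real scalar factor. For $j\not\in S$ the kernel component $k_j$ vanishes identically (since $K$ is supported on the core sites), so the corresponding component of $F_K$ is trivially zero.

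For $j\in S$, I would plug in the explicit range solution from~\eqref{e.Range.0}, namely $h^{(0)}_j = \ell_j e^{\im\theta_j}$ with $\ell_j + R = \sqrt{R^2 - |k_j|^2}$. Writing $k_j = \im\beta_j e^{\im\theta_j}$ with $\beta_j\in\RR$ and recalling that $v_j = R e^{\im\theta_j}$, a direct computation yields
\begin{equation*}
\bigl|h^{(0)}_j\bigr|^2 + |k_j|^2 + 2\Re\bigl(v_j\,\overline{h^{(0)}_j}\bigr) = \ell_j^2 + \beta_j^2 + 2R\ell_j = (\ell_j + R)^2 + \beta_j^2 - R^2 = 0,
\end{equation*}
so the scalar factor multiplying $k_j$ vanishes at every core site and $F_K(v;k,0)=0$ follows. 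Since the identity $F_K(v;\cdot,0)\equiv 0$ holds on the whole $\RR^4$, all its $k$-derivatives at $k=0$ vanish, giving $D_k^{(m)}F_K(v;0,0)=0$ for every $m\geq 0$.

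Conceptually, this reflects the fact that at $\eps=0$ the system decouples site-by-site and the entire 4-torus of unperturbed solutions persists: the range correction $h^{(0)}$ is constructed exactly so as to keep the configuration $v+k+h^{(0)}$ on this torus (which is precisely the content of the algebraic identity $(\ell_j+R)^2+\beta_j^2 = R^2$, i.e.\ $|v_j+k_j+h^{(0)}_j|=R$), so $F(v;k+h^{(0)},0)=0$ holds identically and a fortiori its projection onto $K$ vanishes. Consequently there is no genuine obstacle at this stage; the real work is deferred to the higher orders in $\eps$ of the bifurcation equation.
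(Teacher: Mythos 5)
Your proof is correct and takes essentially the same route as the paper's: set $\eps=0$ in the kernel equation to obtain $F_K(v;k,0)=-\frac34\bigl(|h^{(0)}|^2+|k|^2+2\Re(v\overline{h^{(0)}})\bigr)k$ and observe that the scalar factor vanishes because $h^{(0)}$ solves the order-zero range equation~\eqref{e.Range.0}. Your explicit check of the identity $(\ell_j+R)^2+\beta_j^2=R^2$ (and the remark that $k_j=0$ off the core sites) merely spells out what the paper asserts by citing~\eqref{e.Range.0}.
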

\begin{proof}
Since the dependence on $\eps$ of $F_K$ is given also by the
corresponding dependence of $h(v;k,\eps)$, we insert the expansion
\eqref{e.h.eps} of $h$ in powers of $\eps$, and we perform the
corresponding expansion of the above function as $F_K(v;k,\eps) =
F_K(v;k,0) + {\mathcal O}(\eps)$. It turns out that
$F_K(v;k,0)=-\frac34\tond{\bigl|{h^{(0)}}\bigr|^2 + |k|^2 +
  2\Re\bigl({v\overline{h}^{(0)}}\bigr)}k$, which is identically zero
since $h^{(0)}$ solves~\eqref{e.Range.0}, and the thesis follows.
\end{proof}

From the above Lemma it follows that all $v$ as in~\eqref{e.norm.w}
represent possible candidates for the continuation, since we have
$F_K(v;0,0)=0$. Furthermore, for $\eps\not=0$ the kernel equation
takes the form
\begin{equation}
  \label{e.Kern_red}
  P(v;k,\eps)= 0 \ ,\qquad \eps P(v;k,\eps):= F_K(v;k,\eps)\ ,
\end{equation}
where obviously $P(v;\cdot,\cdot):\RR^4\times\RR\to\RR^4$, and in
$\eps=0$ one has to read the definition of $P$ as
$P(v;k,0)=\partial_\eps F_K(v;k,0)$.  Thus, we get a restriction on the
possible bifurcation points, as follows;

\begin{lemma}
A necessary condition for $v^*$ to be a bifurcation point of the kernel
equation is the following
\begin{equation}
\label{e.bif.point}
\Pi_K(Lv^*) = 0 \ .
\end{equation}
\end{lemma}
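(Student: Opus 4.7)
The plan is to compute $P(v^*;0,0)$ explicitly and to observe that this quantity must vanish whenever $v^*$ is a bifurcation point. Indeed, a bifurcation point by definition admits a continuation $k(\eps)\to 0$ as $\eps\to 0$ of the trivial solution $k=0$ at $\eps=0$; evaluating $P(v^*;k(\eps),\eps)=0$ in the limit $\eps\to 0$ and using continuity of $P$ forces $P(v^*;0,0)=0$. The task therefore reduces to identifying $P(v^*;0,0)$.

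By the preceding lemma $F_K(v;k,0)\equiv 0$, so the definition $\eps P(v;k,\eps):=F_K(v;k,\eps)$ gives $P(v;k,0)=\partial_\eps F_K(v;k,0)$; in particular $P(v;0,0)=\partial_\eps F_K(v;0,0)$. To compute this, I would substitute $w=k+h(v;k,\eps)$ into the definition~\eqref{e.F.1} of $F$ and project onto $K$. Using $\Lambda k=0$ (since $k\in K=\mathrm{Ker}(\Lambda)$) and $\Pi_K\Lambda h=0$ (since $\Lambda h\in H=K^\perp$), one is left with
\begin{equation*}
  F_K(v;k,\eps)=\eps\,\Pi_K L v + \eps\,\Pi_K L\bigl(k+h(v;k,\eps)\bigr) - \Pi_K\bigl(N_2(v;k+h) + N_3(k+h)\bigr)\ .
\end{equation*}

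Differentiating in $\eps$ and evaluating at $(k,\eps)=(0,0)$, the zeroth-order range analysis~\eqref{e.h.02} yields $h^{(0)}(v;0)=0$, hence $h(v;0,0)=0$ and $w\big|_{k=0,\eps=0}=0$. The term $\eps\Pi_K L(k+h)$ contributes $\Pi_K L\bigl(0+h(v;0,0)\bigr)=0$ upon differentiation. Since $N_2(v;w)$ is quadratic and $N_3(w)$ is cubic in $w$, both vanish together with their first derivatives at $w=0$, so their contribution to $\partial_\eps F_K(v;0,0)$ is also zero. Therefore
\begin{equation*}
  P(v;0,0)=\Pi_K(Lv)\ ,
\end{equation*}
and the requirement $P(v^*;0,0)=0$ produces exactly the necessary condition~\eqref{e.bif.point}.

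This argument is essentially a direct computation once the structural cancellations ($\Lambda k=0$, $\Pi_K\Lambda h=0$, vanishing of the nonlinear parts at $w=0$) are in place; the only delicate point is justifying that $P$ is continuous at $\eps=0$ so that the limit argument for the bifurcation point makes sense, which follows from the regularity of $h(v;k,\eps)$ guaranteed by the implicit function theorem applied to the range equation.
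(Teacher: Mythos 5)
Your proposal is correct and follows essentially the same route as the paper: the bifurcation point requires $k(\eps)\to0$, so by regularity/continuity of $P$ one needs $P(v^*;0,0)=0$, and then $P(v;0,0)=\partial_\eps F_K(v;0,0)=\Pi_K(Lv)$. You merely spell out the cancellations ($\Lambda k=0$, $\Pi_K\Lambda h=0$, $h(v;0,0)=0$, and the quadratic/cubic nature of $N_2,N_3$) that the paper compresses into ``one easily gets.''
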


\begin{proof}
As we said, for $\eps\not=0$, the kernel equation is equivalent to
formula~\eqref{e.Kern_red}, so we want a $v^*$ and a $k(\eps)$ such
that $P(v^*;k(\eps),\eps)=0$, but with
$\lim_{\eps\to0}k(\eps)=0$. Thus, given the regularity of $P$, we need
\begin{equation}
  P(v^*,0,0) = 0 \ ,
\end{equation}
and using as before the expansion of $h$ and of $F_K$, one easily gets
\begin{equation}
  P(v,0,0) = \partial_\eps F_K(v,0,0) = \Pi_K(Lv) \ .
\end{equation}
\end{proof}

The compatibility condition~\eqref{e.bif.point} is equivalent to the
variational energy method introduced in \cite{Kap01} (see also Remark
2.1). Using the notation of the phase shifts $\varphi$
(see~\eqref{e.phidef}), one easily gets as candidate bifurcation
points the same ones shown in Section~\ref{s:zero.order}, i.e. the two
isolated points $\varphi\in\{(0,0,0),(\pi,\pi,\pi)\}$, the three
families~\eqref{e.sol-shifts}, and their intersections
$\varphi\in\{\pm(\frac{\pi}{2},\frac{\pi}{2},\frac{\pi}{2})\}$ that we
call symmetric vortices.

The next step is of course to test the applicability of the implicit
function theorem, i.e. to check whether $dP(v^*;0,0)$ has the correct
rank\footnote{It might be useful to remind that, with $P$ being regular in
  $(k,\eps)$, so will be the implicit function, if the corresponding
  theorem applies. Thus, if the dependence of $k$ with respect to
  $\eps$ in the implicit function is at least linear in the directions
  tranversal to the gauge, we will have $\text{rk} (D_k P(v^*,0,0)) =
  3$, and a regular $k(\eps)$. Nevertheless, in principle, we could have a
  regular implicit function with $\eps$ being a regular function of
  one of the $k$, but $k(\eps)$ sublinear and thus nonregular in the
  origin, as in the trivial lower dimensional example
  $f(y,\eps)=y^2-\eps=0$.}.
Introducing the linear operator
\begin{equation}
  \label{Toperator}
  T  k :=
    \Pi_K L k - \frac32\Re\left(v^*\overline{h}^{(1,0)}(v^*;0,0)\right)k  
  \ ,
\end{equation}
we have the following

\begin{lemma}
\label{l.nondegen.IFT}
Let $v^*$ satisfy~\eqref{e.bif.point}, then a sufficient
condition for its continuation for $\eps\not=0$ is
\begin{equation}
  \label{nondegen.IFT}
  {\rm rk} (T_1) = 3
\end{equation}
where the linear transformation $T_1$ is given by
\begin{equation}
  \label{nondegen.IFT.2}
  T_1 \begin{pmatrix} k \\ \eps  \end{pmatrix} :=
  \left(
    T k
\ \Big\vert\ 
    \Pi_K L h^{(1,0)}(v^*;0,0) \eps 
  \right)
  \ .
\end{equation}
\end{lemma}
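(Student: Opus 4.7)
My plan is to treat $P(v^*;k,\eps)=0$ as the reduced bifurcation equation in the Lyapunov--Schmidt scheme and to apply the implicit function theorem to it, after accounting for the $U(1)$ gauge symmetry of the dNLS model.

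First, I would identify $T_1$ with the full Jacobian $D_{(k,\eps)}P(v^*;0,0)$. Starting from the expression
\[
F_K(v;k,\eps) = \eps\Pi_K L(h+k+v) - \tfrac{3}{4}\bigl(|h|^2+|k|^2+2\Re(v\overline{h})\bigr)k,
\]
and the expansions of $h(v;k,\eps)$ in $\eps$ and in $|k|$ from the previous subsection (in particular $h^{(0)}=O(|k|^2)$ by~\eqref{e.h.02}, and $h(v^*;0,\eps) = \eps h^{(1,0)}(v^*) + O(\eps^2)$), I would extract the linear-in-$k$ part of $P(v^*;k,0)=\partial_\eps F_K(v^*;k,0)$ and the $\eps$-derivative $\partial_\eps P(v^*;0,0)$. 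A direct computation gives for the former exactly $Tk$ as in~\eqref{Toperator}, and for the latter $\Pi_K L h^{(1,0)}(v^*;0,0)$, using also $\Pi_K L v^* = 0$ from the bifurcation condition~\eqref{e.bif.point}. Together these reproduce $T_1$.

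Second, I would exploit the gauge equivariance $F(e^{\im\sigma}v;e^{\im\sigma}w,\eps) = e^{\im\sigma}F(v;w,\eps)$. Differentiating in $\sigma$ at $\sigma=0$, $w=0$ and using $F(v^*;0,0)=0$, one checks that $\im v^* \in \ker T$, so that $(\im v^*,0)^t \in \ker T_1$; this is the unremovable gauge null direction built into the problem. The same equivariance, projected onto $K$, shows that the image of $T_1$ already lies in a codimension-one subspace of $K$ complementary to the gauge direction of the target, so the rank cannot exceed $3$ in any case.

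Finally, under the hypothesis $\mathrm{rk}(T_1)=3$, I would restrict $k$ to a transverse slice $K_\perp \subset K$ to $\langle \im v^* \rangle$ and project $P$ onto the quotient of $K$ by the corresponding target gauge direction, producing a reduced map $\widetilde P : K_\perp \times \RR \to K/\langle \mathrm{gauge}\rangle \cong \RR^3$. The rank hypothesis forces the differential of $\widetilde P$ at the origin to be surjective, with $\eps$ supplying a direction complementary to the image of the reduced $k$-Jacobian. The classical implicit function theorem then yields a continuous (in fact smooth) curve $\eps\mapsto k(\eps)$ with $k(0)=0$ solving $\widetilde P=0$; pairing this with $h(v^*;k(\eps),\eps)$ obtained from the already-solved range equation gives the desired continuation $\phi(\eps) = v^* + k(\eps) + h(v^*;k(\eps),\eps)$, and the limit $\lim_{\eps\to 0} w(\eps)=0$ is automatic. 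The main delicate point is this last step: formulating the gauge quotient rigorously so that the rank-$3$ hypothesis on $T_1$ translates precisely into the invertibility hypothesis of the IFT for the reduced map. All other steps are routine Taylor-expansion calculations using the range-equation expansions derived earlier.
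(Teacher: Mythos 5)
Your proposal matches the paper's own proof: the paper likewise identifies $T_1$ with the full differential $dP(v^*;0,0)$ of the reduced kernel equation $\eps P = F_K$ and invokes the implicit function theorem after factoring out the zero eigenvalue coming from gauge invariance, so your argument is essentially an expanded version of that same observation. The only caveat is your parenthetical ``in fact smooth'': as the paper notes in a footnote, when the rank-$3$ condition is realized only through the $\eps$-column of $T_1$, the implicit function theorem yields $\eps$ as a regular function of a kernel variable and $k(\eps)$ may be sublinear (hence non-regular at the origin), although the continuation itself still exists.
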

\begin{proof}
Looking again at the expansion of $F_K$, one recognizes in
\eqref{Toperator} and \eqref{nondegen.IFT.2} the expressions of $D_k
P(v^*,0,0)[k]$ and $\partial_\eps P(v^*,0,0)\eps$, i.e. the
differential applied to its increment, so that
$T_1=dP(v^*,0,0):\RR^5\to\RR^4$, and condition~\eqref{nondegen.IFT} is
the standard one of the implicit function theorem, once we factor out
the zero eigenvalue that comes from the gauge invariance.
\end{proof}

If the above Lemma does not apply we enter in the field of the
degenerate implicit function theorems where a plethora of possible
subcases exist (see, e.g., \cite{Loud61}) without clear and easily
unifying statements. In particular if the rank is zero then clearly
the function $P$ starts with order two terms which have to be checked
explicitly. For the intermediate values of the rank, in general all
the possibilities are present, i.e. both to be able to directly prove existence, or non-existence, and the necessity to look at higher order
terms to overcome the degeneracy. There is nevertheless the easier
situation when the point we are interested in belongs to a family,
i.e. besides the gauge symmetry coming from the original problem, one
has that the $v^*$ solving~\eqref{e.bif.point} is not isolated, as for
our three families~\eqref{e.sol-shifts}. In such a case, a necessary
condition for the continuation can be given (see, e.g., Proposition
2.10 of~\cite{PelKF05b}). In the following Lemma we
report a formulation suitably adapted to our system. Let
us denote by $\R$ and $\K$ respectively the range and the kernel of
$T$, then we have

\begin{lemma}
\label{l.degen.IFT}
Let $v^*$ belong to a 2-dimensional family $v^*(\varphi)$
satisfying~\eqref{e.bif.point}. Assume that rk$(T) = 2$, $\Pi_K L
h^{(1,0)}(v^*(\varphi);0,0)\equiv0$ on the whole family and that
$\Pi_K L h^{(2,0)}(v^*;0,0)\not\equiv0$. A necessary condition for the
continuation of $v^*$ for $\eps\not=0$ is that
\begin{equation}
  \label{necessary.IFT}
  \Pi_K L h^{(2,0)}(v^*;0,0) \in \R
  \ .
\end{equation}
\end{lemma}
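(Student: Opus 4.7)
The plan is to extract the stated obstruction by applying a secondary Lyapunov--Schmidt reduction to $P(v^*;k,\eps)=0$ itself: split $K=\R\oplus\K$, solve the $\R$-projection by the implicit function theorem in $k_\R\in\R$, and read off the leading $\eps^{2}$ contribution to the residual $\K$-projection. The orthogonal splitting is available because $T:K\to K$ is self-adjoint: $\Pi_K L$ is symmetric on $K$ (as $L$ is symmetric and $\Pi_K$ is an orthogonal projector), and pointwise multiplication by the real function $\tfrac{3}{2}\Re\bigl(v^*\overline{h}^{(1,0)}(v^*;0,0)\bigr)$ is trivially self-adjoint. Together with $\mathrm{rk}(T)=2$, this yields $\dim\R=\dim\K=2$.

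First I would expand $P(v^*;k,\eps)$ jointly in $k$ and $\eps$, relying on the Taylor expansion of $h(v^*;k,\eps)$ from~\eqref{e.h.eps} and the vanishing identities $\Pi_K L v^*=0$ (the bifurcation condition~\eqref{e.bif.point}) and $\Pi_K L h^{(1,0)}(v^*)=0$ (the hypothesis). Evaluating at $k=0$ gives
\[
  P(v^*;0,\eps)=\eps\,\Pi_K L h^{(1,0)}(v^*)+\eps^{2}\,\Pi_K L h^{(2,0)}(v^*)+O(\eps^{3})=\eps^{2}\,\Pi_K L h^{(2,0)}(v^*)+O(\eps^{3}),
\]
so the pure-$\eps$ obstruction starts precisely at order $\eps^{2}$ with coefficient $\Pi_K L h^{(2,0)}(v^*)$. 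Collecting all contributions one writes
\[
  P(v^*;k,\eps)=Tk+Q(k)+\eps\,A\,k+\eps^{2}\,\Pi_K L h^{(2,0)}(v^*)+O\bigl(|k|^{3},\eps|k|^{2},\eps^{2}|k|,\eps^{3}\bigr),
\]
where $Q$ is the quadratic-in-$k$ part of $P(v^*;\cdot,0)$ and $A=D_k\partial_\eps P(v^*;0,0)$.

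Second I would split $k=k_\R+k_\K$ and use the invertibility of $T|_\R:\R\to\R$ to solve $\Pi_\R P(v^*;k,\eps)=0$ implicitly for $k_\R=K_\R(k_\K,\eps)$, obtaining the estimate $k_\R=O(|k_\K|^{2}+\eps|k_\K|+\eps^{2})$. Substituting this into $\Pi_\K P(v^*;k,\eps)=0$ produces a reduced equation $\Psi(k_\K,\eps)=0$ on a neighborhood of the origin in $\K\times\RR$ whose leading structure is
\[
  \Psi(k_\K,\eps)=\Pi_\K Q(k_\K)+\eps\,\Pi_\K A\,k_\K+\eps^{2}\,\Pi_\K\Pi_K L h^{(2,0)}(v^*)+\text{higher-order terms}.
\]
A continuation $k(\eps)\to 0$ forces $k_\K(\eps)\to 0$ and $\Psi(k_\K(\eps),\eps)=0$; in the regime $k_\K(\eps)=o(\eps)$ the only surviving $\eps^{2}$-term is $\eps^{2}\,\Pi_\K\Pi_K L h^{(2,0)}(v^*)$, whose vanishing is exactly $\Pi_K L h^{(2,0)}(v^*)\in\R$.

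The main obstacle, and the place where the assumption that $\Pi_K L h^{(1,0)}(v^*(\varphi))\equiv 0$ on the \emph{whole family} becomes essential, is to exclude the borderline regime $k_\K(\eps)=\eps k_1+o(\eps)$ with $k_1\in\K$ nonzero, in which $\Pi_\K Q(k_\K)$ and $\eps\,\Pi_\K A\,k_\K$ also contribute at order $\eps^{2}$ and could in principle cancel the obstruction. The tangent directions to the 2-dimensional family of bifurcation points lie in $\K$ and, by the rank hypothesis together with a dimension count ($\dim\K=2$), actually span $\K$. Differentiating the identity $\Pi_K L h^{(1,0)}(v^*(\varphi))\equiv 0$ along these tangent directions produces the relations forcing the $\K$-projections of $Q(k_1)$ and $A\,k_1$ to vanish whenever $k_1$ is tangent to the family. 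Hence the only residual $\K$-obstruction at order $\eps^{2}$ is $\eps^{2}\Pi_\K\Pi_K L h^{(2,0)}(v^*)$, and~\eqref{necessary.IFT} is recovered as the necessary condition for its compatibility.
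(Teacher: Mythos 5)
Your proposal is correct and follows essentially the same route as the paper's proof: a secondary Lyapunov--Schmidt split $K=\R\oplus\K$ of the reduced kernel equation $P=0$, the joint expansion of $P$ in $(k,\eps)$, the use of the identities $P(v^*(\varphi);0,0)\equiv0$ and $\partial_\eps P(v^*(\varphi);0,0)\equiv0$ along the family (whose tangents span $\K$ by the rank hypothesis) to annihilate the $\K$-projections of the quadratic and mixed terms, and the conclusion that the $\eps^2$ coefficient $\Pi_K L h^{(2,0)}(v^*;0,0)$ must lie in $\R$. Your explicit handling of the borderline scaling $k_\K\sim\eps$ and of the self-adjointness of $T$ just makes visible steps the paper leaves terse.
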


\begin{remark}
If the (complex) matrix $T$ is self-adjoint, then the
necessary condition \eqref{necessary.IFT} is equivalent to
\begin{equation}
  \label{necessary.IFT.2}
  \Pi_K L h^{(2,0)}(v^*;0,0) \perp \K
  \ .
\end{equation}
We will exploit this equivalence as a nonexistence argument for the
continuation of the three families.
\end{remark}

\begin{proof}
  We are interested in a solution of $P(v^*;k(\eps),\eps)=0$ as a
  continuation of $P(v^*;0,0)=0$.  Given that $v^*$ belong to a
  2-dimensional family (one dimension from the gauge, and the other
  being the proper family dimension), then $T=D_kP(v^*;0,0)$ has a
  zero eigenvalue of multiplicity at least 2. By the hypothesis on its
  rank, it follows that the multiplicity is exactly 2. The idea is
  clearly to perform a Lyapunov-Schmidt decomposition, so let us
  denote by $k_\K$ and $k_\R$ the kernel component of $k$ and its
  orthogonal one, respectively, and by $[\,\cdot\,]_\R$ and
  $[\,\cdot\,]_\K$ the projections onto $\R$
  and its orthogonal space, respectively. Our problem is
  then written as
  \begin{equation}
    \label{e.secondaryLS}
    \begin{aligned}
      \Bigl[ P(v^*;k_\R + k_\K,\eps) \Bigr]_\R = 0
      \\
      \Bigl[ P(v^*;k_\R + k_\K,\eps) \Bigr]_\K = 0
    \end{aligned}
  \end{equation}
  It is also useful to expand $P$ with respect to $k$ and $\eps$, namely
  \begin{equation}
    \label{e.Pexpansion}
    \begin{matrix}
      P(v^*;0,0)  & + &&&&& \phantom{\vdots}
      \\
      D_k P(v^*;0,0)[k]  & + & \eps \partial_\eps P(v^*,0,0)  &&&&
      \phantom{\vdots}
      \\
      D_{kk} P(v^*;0,0)[kk] & + & \eps D_k \partial_\eps P(v^*,0,0)[k]
      & +  & \eps^2 \partial^2_\eps P(v^*,0,0) && \phantom{\vdots}
      \\
      \vdots & + & \vdots & + & \vdots & + & \ddots
    \end{matrix}
  \end{equation}
The $n$-th column contains homogeneous terms of order $\eps^n$ and the
$r$-th line gather the homogeneous terms of order $r$ in $(k,\eps)$.
In the present case, $P(v^*;0,0)=0$ since $v^*$ is the solution we are
going to continue, $\partial_\eps P(v^*,0,0)=\Pi_K L
h^{(1,0)}(v^*;0,0)=0$ by hypothesis, $D_k P(v^*;0,0)=T$ so that
$Tk_\K=0$. Moreover, since $P(v^*(\varphi);0,0)=0$ on the whole
family, expanding $k=k_\K+k_\R$, each term in the first column
vanishes when applied only to $k_\K$ (which is characterized exactly
by the family directions due the hypothesis rk$(T) = 2$); and the same
applies to the second column, since $\partial_\eps
P(v^*(\varphi),0,0)=0$ on the whole family too. Thus the kernel
directions of $T$ are kernel directions for $D_k \partial_\eps
P(v^*,0,0)$ and also for the higher order terms in the $k$
expansion. We thus have
  \begin{equation}
    \label{e.Pexpansion2}
    \begin{matrix}
      0  & + &&&&& \phantom{\vdots}
      \\
      T k_\R  & + & 0  &&&&
      \phantom{\vdots}
      \\
      B(k_\R,k_\R)+B(k_\R,k_\K) & + & \eps M k_\R
      & +  & \eps^2 \partial^2_\eps P(v^*,0,0) && \phantom{\vdots}
      \\
      \vdots & + & \vdots & + & \vdots & + & \ddots
    \end{matrix}
  \end{equation}
  where $B$ denotes the bilinear form $D_{kk} P(v^*;0,0)$ and $M=D_k
  \partial_\eps P(v^*,0,0)$. Recalling that $k$ has to vanish with
  $\eps$, the leading order of the range projection is
  \begin{equation}
    \label{e.Pexp-R}
    T k_\R + \eps^2 \Bigl[ \partial^2_\eps P(v^*,0,0) \Bigr]_\R = 0 \ .
  \end{equation}
  Thus have that $k_\R$ can be solved as a function of $k_\K$ and
  is of order $\eps^2$ plus corrections. Since $\Bigl[ T k_\R
    \Bigr]_\K = 0$, the leading order of the kernel equation of the
  Lyapunov-Schmidt decomposition is
  \begin{equation}
    \label{e.Pexp-K}
    \Bigl[ B(k_\R,k_\K) + \eps^2 \partial^2_\eps P(v^*,0,0) \Bigr]_\K = 0 \ ,
  \end{equation}
  where we omitted the terms $\eps L k_\R$ and $B(k_\R,k_\R)$
  respectively of order 3 and 4 in $\eps$. But in \eqref{e.Pexp-K},
  the term $B(k_\R,k_\K)$ is of order higher than 2, since $k_\K$ has
  to vanish with $\eps$: thus the necessary condition
  $\Bigl[\partial^2_\eps P(v^*,0,0) \Bigr]_\K = 0$, which is
  exactly~\eqref{necessary.IFT} once we recall that $\partial^2_\eps
  P(v^*,0,0)=\Pi_K L h^{(2,0)}(v^*;0,0)$.
\end{proof}

As we said before, for the totally degenerate cases, i.e. when
$dP(v^*;0,0)\equiv0$, we have to consider higher orders terms of
$P$. We report below the second order ones (dropping the dependence on
$(v;k,\eps)$ in the various $h^{(m,n)}$),

\begin{equation}
  \label{e.K.main}
  \begin{aligned}
    P_2(v;k,\eps) = \eps^2 \phantom{\Bigl(} & \Pi_K L h^{(2,0)} + \null
  \\ 
    \null + \eps  \Bigl( & \Pi_K L h^{(1,1)}
      - \frac34 \tond{2\Re\tond{v\overline{h}^{(2,0)}} + \bigl|{h^{(1,0)}}\bigr|^2}k
    \Bigr) + \null
  \\ 
    \null + \phantom{\eps} \Bigl( & \Pi_K Lh^{(0,2)}
      - \frac32 \Re\tond{v\overline{h}^{(1,1)}} k \Bigr) \ .
\end{aligned}
\end{equation}

\subsection{Existence and nonexistence}

Considering all the candidate $v^*$ satisfying~\eqref{e.bif.point}, we
analyze here their continuation to solution of the full kernel
equation, and thus of our original problem.  The first step is to
check if it is possible to apply Lemma~\ref{l.nondegen.IFT}.  It is
tedious yet straightforward to check that
\begin{equation}
\label{e.Pe=0}
  \Pi_K L h^{(1,0)}(v^*;0,0) = 0\ ,
\end{equation}
i.e. $\partial_\eps P(v^*,0,0)=0$.  For these as well as the
forthcoming calculations one could check also the Appendix. As a
consequence, condition~\eqref{nondegen.IFT} of
Lemma~\ref{l.nondegen.IFT} reduces to
\begin{equation}
  \label{reduced.nondegen.IFT}
  {\rm rk} (T) = 3
  \ .
\end{equation}

\paragraph{Existence of the continuation for $(0,0,0)$ and $(\pi,\pi,\pi)$.}
It is straightforward to verify that for the two isolated candidates,
i.e. those with phase shifts $(0,0,0)$ and $(\pi,\pi,\pi)$, the
matrices representing $T$ are respectively
\begin{equation*}
  \begin{pmatrix}
    -2 &  1 &  0 &  1 \\
     1 & -2 &  1 &  0 \\
     0 &  1 & -2 &  1 \\
     1 &  0 &  1 & -2 \\
  \end{pmatrix}
\quad\hbox{and}\qquad
  \begin{pmatrix}
     2 & -1 &  0 & -1 \\
    -1 &  2 & -1 &  0 \\
     0 & -1 &  2 & -1 \\
    -1 &  0 & -1 &  2 \\
  \end{pmatrix}\ ,
\end{equation*}
both with a single zero eigenvalue associated to the gauge direction
$e^{\im\theta}(1,1,1,1)$, and thus rank equal to 3, so that
Lemma~\ref{l.nondegen.IFT} applies.

\paragraph{Non existence of the continuation for the family
$(\varphi,\varphi,\pi-\varphi)$.}  Let us start with the first of the
three families of candidates, the asymmetric vortices, ignoring the
straightforward cases $\varphi\in\{0,\pi\}$, and
$\varphi=\pm\frac{\pi}2$ of the symmetric vortices left for a
subsequent analysis. The matrix representing $T$ is
\begin{equation*}
  T = \frac{\cos(\varphi)}2
  \begin{pmatrix}
 0            & e^{-\im\varphi} & 0            & e^{-\im\varphi}
\\
 e^{\im\varphi} & -2            & e^{-\im\varphi} & 0
\\
 0            & e^{\im\varphi}  & 0            & e^{\im\varphi}
\\
 e^{\im\varphi} & 0             & e^{-\im\varphi} & 2
  \end{pmatrix}\ .
\end{equation*}
Such a matrix has a double zero eigenvalue, associated as expected to
the gauge and the family directions, respectively
$e^{\im\theta}(1,e^{\im\varphi},e^{\im2\varphi},-e^{\im\varphi})$ and
$\partial_\varphi v^*(\varphi) = e^{\im(\theta+\varphi)}
(0,1,2e^{\im\varphi},-1)$. Therefore its rank is equal to 2, so that
Lemma~\ref{l.nondegen.IFT} does not apply. We therefore check for non
existence via Lemma~\ref{l.degen.IFT}: recalling~\eqref{e.Pe=0}, one
has to check only the necessary condition \eqref{necessary.IFT.2}, and
the calculations show that
\begin{equation*}
  \Pi_K L h^{(2,0)}= \frac{R\sin(\varphi)}{4\omega^2}\im e^{\im\theta}
  (1,0,0,e^{\im\varphi})\ ,
\end{equation*}
which is indeed non orthogonal to $\partial_\varphi v^*(\varphi)$.

\paragraph{Non existence of the continuation for the family
$(\varphi,\pi-\varphi,\pi-\varphi)$.}  As before, ignoring the special
cases $\varphi\in\{0,\pi\}$ and $\varphi=\pm\frac{\pi}2$, the matrix
representing $T$ is
\begin{equation*}
  T = \frac{\cos(\varphi)}2
  \begin{pmatrix}
-2            & e^{-\im\varphi} & 0            & e^{\im\varphi}
\\
 e^{\im\varphi}  & 0            & e^{\im\varphi} & 0
\\
 0            & e^{-\im\varphi}  & 2            & e^{\im\varphi}
\\
 e^{-\im\varphi} & 0             & e^{-\im\varphi} & 0
  \end{pmatrix}\ .
\end{equation*}
Such a matrix has a double zero eigenvalue, associated as expected
to the gauge and the family directions, respectively
$e^{\im\theta}(1,e^{\im\varphi},-1,e^{-\im\varphi})$ and
$\partial_\varphi v^*(\varphi) = \im e^{\im(\theta+\varphi)}
(0,1,0,-e^{-\im2\varphi})$. Therefore its rank is equal to 2, so that
Lemma~\ref{l.nondegen.IFT} does not apply. Looking for non existence
via Lemma~\ref{l.degen.IFT}, we again has to check only the necessary
condition \eqref{necessary.IFT.2}, and the calculations show that
\begin{equation*}
  \Pi_K L h^{(2,0)}= \frac{R\sin(\varphi)}{4\omega^2}\im e^{\im\theta}
  (1,0,0,-e^{-\im\varphi})\ ,
\end{equation*}
which is indeed non orthogonal to $\partial_\varphi v^*(\varphi)$.

\paragraph{Non existence of the continuation for the family
$(\varphi,\pi-\varphi,\varphi)$.}  As before, ignoring the special
cases $\varphi\in\{0,\pi\}$ and $\varphi=\pm\frac{\pi}2$, the
matrix representing $T$ is
\begin{equation*}
  T = \frac{\cos(\varphi)}2
  \begin{pmatrix}
 0            & e^{-\im\varphi} & 0            & e^{-\im\varphi}
\\
 e^{\im\varphi}  & 0            & e^{\im\varphi} & 0
\\
 0            & e^{-\im\varphi}  & 0            & e^{-\im\varphi}
\\
 e^{\im\varphi} & 0             & e^{\im\varphi} & 0
  \end{pmatrix}\ .
\end{equation*}
Such a matrix has a double zero eigenvalue, associated as expected
to the gauge and the family directions, respectively
$e^{\im\theta}(1,e^{\im\varphi},-1,-e^{\im\varphi})$ and
$\partial_\varphi v^*(\varphi) = \im e^{\im(\theta+\varphi)}
(0,1,0,-1)$. Therefore its rank is equal to 2, so that
Lemma~\ref{l.nondegen.IFT} does not apply. Looking for non-existence
via Lemma~\ref{l.degen.IFT}, we again has to check only the necessary
condition \eqref{necessary.IFT.2}, and the calculations show that
\begin{equation*}
  \Pi_K L h^{(2,0)}= \frac{R\sin(\varphi)}{4\omega^2}\im e^{\im\theta}
  (1,0,0,e^{\im\varphi})\ ,
\end{equation*}
which is indeed non orthogonal to $\partial_\varphi v^*(\varphi)$.

\paragraph{Non existence of the continuation for the vortices
$\pm\left(\frac{\pi}2,\frac{\pi}2,\frac{\pi}2\right)$.}  The two
vortices are located at the intersection of the above analyzed three
families, and due to this fact the degeneracy is complete. It is
thus not surprising that $T$ turns out to be the null matrix,
i.e. $dP(v^*;0,0)=0$.  The terms we have to investigate are those
given in the second order expansion of $P$, written
in~\eqref{e.K.main}.

After some calculations, we get that
\begin{align*}
  \Pi_K L h^{(2,0)} &= \frac{R}{4\omega^2}e^{\im\theta} (\im,0,0,-1)\ ,
\\  
  \Pi_K L h^{(1,1)} &= - \frac{k}{4\omega}\ ,
\\
  \Pi_K Lh^{(0,2)} &= \frac{3R}{4\omega}e^{\im\theta}
                    (k_1^2, \im k_2^2, -k_3^2, -\im k_4^2)\ ,
\end{align*}
and

\begin{align*}
  - \frac34 \tond{2\Re\tond{v\overline{h}^{(2,0)}} + \bigl|{h^{(1,0)}}\bigr|^2}k &=
    \frac{k}{4\omega}\ ,
\\    
  - \frac32 \Re\tond{v\overline{h}^{(1,1)}} k &=
       \frac{3R}{8\omega}e^{\im\theta} (\im k_1(k_2-k_4),  k_2(k_1-k_3),
                                      \im k_3(k_2-k_4), k_4(k_1-k_3))\ .
\end{align*}

In particular, the mixed derivatives $D_k\partial_\eps P(v^*;k,h)[k]$
vanish. Setting $P_2(v^*;k,h)=0$, one obtains the system
\begin{equation}
  \label{e.K.system}
  \left\{
  \begin{aligned}
    2k_1^2 + \im k_1(k_2-k_4)    &= -\frac{2\im}{3\omega}
\\
    2\im k_2^2 + k_2(k_1-k_3)    &= 0
\\
   -2k_3^2 + \im k_3(k_2-k_4)    &= 0
\\
   -2\im k_4^2 + k_4(k_1-k_3)    &= \frac2{3\omega}
  \end{aligned}
  \right.\ ,
\end{equation}
which is clearly impossible: indeed, keeping in mind that $k_j$ are
real variables, it is evident in the first equation that the
l.h.t. can be pure imaginary only when $k_1=0$, which however implies
$0=- 2\im/3\omega$. A similar argument can be developed on the last
equation, playing with $k_4$. Since for the vortices we have $P=P_2 +
\text{h.o.t.}$, the nonexistence of solutions for $P_2=0$ implies the
non existence of solutions for the full equation $P=0$.

\begin{remark}
The proof of Theorem~\ref{t.nonexistence} based on the
Lyapunov-Schmidt decomposition, assuming only the continuity of the
solution, is more general than the one obtained through the
perturbative approach, that requires a $C^2$ regularity.  Conversely,
from a practical point of view, the perturbative method allows a
direct approach that can be easily and efficiently implemented via
algebraic manipulation.  This is maybe even more evident in the
analogue problem on other dNLS models, like the zigzag one.
\end{remark}

\section{Discussion and Conclusions}

In the present work, we have revisited the topic of discrete solitons
and vortices in lattice models of the discrete nonlinear
Schr{\"o}dinger type.  Motivated by the interest in examining 2d
asymmetric and super-symmetric configurations, but also by the desire
to have a setting that is more analytically tractable, we came up with
a 1d toy model. The latter emulates a key feature of the 2d lattice
through the inclusion (in a 1d chain) of interactions with the
next-to-next-nearest neighbor.  To leading order, the persistence
conditions for a four-site configuration suggest the possibility of,
not only ``standard'' solutions with phase differences of $0,~\pi$,
but also of ones involving relative phases of $\pi/2$ (super-symmetric
vortex-like states), and of asymmetric ones involving a free parameter
$\vphi$, strongly reminiscent of those explored in~\cite{PelKF05b} at
the discrete level and in~\cite{tristram} at the level of continuum
problems with periodic potentials.

To address the question of persistence of such states, we have
utilized two different methods; one involved a conserved quantity upon
the assumption of $\C^2$ regularity of the solutions with respect to
the small coupling $\eps$, while the second one required only
continuity, using a Lyapunov-Schmidt reduction (and projections to the
resulting kernel and range equations) to derive necessary and
sufficient conditions for the persistence of the different solution
families.  The surprising finding is that among all the possible
solutions {\it only} the ones with trivial phase shifts of $0$ and
$\pi$ can be found to persist in a four-sites configuration.

This result raises some intriguing questions.  A natural one is how
general is this result. Admittedly, in the case of every model, the
relevant conserved quantity or LS reduction need to be re-performed
and the answer has to be given on a case by case basis.  However, our
experience so far with 1d chains suggests that for generalized dNLS
models, such a conservation law should be traceable more generally and
the conclusions are likely to be similar for super-symmetric and
asymmetric families of vortices. On the other hand, we realize that
this topic requires considerable additional examination for a
conclusive closure. At the same time, understanding the similarities
and differences of the present setting with that of the 2d problem and
the supersymmetric vortices therein (as well as how these
considerations extend to higher dimensional settings) is a
particularly intriguing question.  Such topics are currently under
consideration and will be reported in future publications.

\section*{Acknowledgments:}
We thank Prof. Dmitry Pelinovsky for the helpful discussions on the
proof of Lemma 4.4 that we had during the SIAM conference on Nonlinear
Waves and Coherent Structures. This work has been partially supported
by NSF-PHY-1602994 (PGK). Also, VK and PGK 
acknowledge that this work made possible by NPRP grant {\#} [9-329-1-067] 
from Qatar National Research Fund (a member of Qatar Foundation).


\def\cprime{$'$} \def\i{\ii}\def\cprime{$'$} \def\cprime{$'$}

\end{document}